\newtheorem{theorem}{Theorem}[section]
\newtheorem{lemma}[theorem]{Lemma}
\theoremstyle{definition}
\newtheorem{definition}[theorem]{Definition}
\newtheorem{assumption}[theorem]{Assumption}
\newtheorem{remark}[theorem]{Remark}
\newcommand{\Hilb}[1][]{\mathcal{H}_{#1}}
\newcommand{\chn}{\mathcal{E}}
\newcommand{\Ps}{\mathcal{P}}
\newcommand{\Ts}{\mathcal{T}}
\newcommand{\Ms}{\mathcal{M}}
\newcommand{\Pns}{\mathcal{P}_\text{NS}}
\DeclareMathOperator{\tr}{Tr}
\newcommand{\Tr}[2][]{\tr_{#1} \left ( #2 \right )}
\newcommand{\proj}[1]{\ket{#1}\bra{#1}}
\begin{document}

\title{Is a time symmetric interpretation of quantum theory possible without retrocausality?}
\author{Matthew S. Leifer}
\affiliation{Institute for Quantum Studies \& Schmid College of Science and Technology, Chapman University, Orange, CA 92866, USA}
\author{Matthew F. Pusey}
\affiliation{Perimeter Institute for Theoretical Physics, 31 Caroline Street North, Waterloo, ON N2L 2Y5, Canada}
\date{May 10, 2017}
\begin{abstract}
Huw Price has proposed an argument that suggests a time-symmetric ontology for quantum theory must necessarily be retrocausal, i.e.\ it must involve influences that travel backwards in time. One of Price's assumptions is that the quantum state is a state of reality.  However, one of the reasons for exploring retrocausality is that it offers the potential for evading the consequences of no-go theorems, including recent proofs of the reality of the quantum state.  Here, we show that this assumption can be replaced by a different assumption, called $\lambda$-mediation, that plausibly holds independently of the status of the quantum state.  We also reformulate the other assumptions behind the argument to place them in a more general framework and pin down the notion of time symmetry involved more precisely.  We show that our assumptions imply a timelike analogue of Bell's local causality criterion and, in doing so, give a new interpretation of timelike violations of Bell inequalities.  Namely, they show the impossibility of a (non-retrocausal) time-symmetric ontology.
\end{abstract}
\maketitle

\section{Introduction}

In \cite{Price2012}, Huw Price gave an argument to the effect that any time symmetric ontology for quantum theory must involve retrocausality (influences that travel backwards in time).  Price's argument is based on analysing an experiment where a single photon is passed through two polarizing beamsplitters in sequence.

One of Price's assumptions, which he calls \emph{Realism} \footnote{We use italics for Price's assumptions and bold for ours, as we later define a weaker notion of \textbf{Realism}.}, is really an assumption of the reality of the quantum state.  Briefly, he assumes that the usual forwards-evolving polarization vector of a photon emerging from a polarizing beam splitter is a real physical property of the photon.  This is an assumption of the reality of the quantum state because the polarization vector is isomorphic to the part of the quantum state that describes the polarization degree of freedom. Price uses this, together with his other assumptions, to argue that there must also be a backwards evolving polarization vector coming from the second beam splitter that is oriented along its measurement direction, and hence there is retrocausality.   

The reality of both forwards and backwards evolving states implied by Price's argument occurs in the transactional interpretation of quantum theory \cite{Cramer1986, Kastner2012}, and is reminiscent of the two state vector formalism of Aharonov and collaborators \cite{Aharonov1964, Aharonov2008}.  Indeed, if we interpret the two state vectors in the formalism as a description of reality, rather than just a convenient mathematical way of re-expressing the predictions of quantum theory in situations of pre- and post-selection, then this theory satisfies Price's assumptions.

However, one possible response to Price's argument is to simply deny the reality of the quantum state, in favor of the $\psi$-epistemic view, which says that the quantum state is not a physical property of a system, but rather something more akin to a probability distribution, such as a Liouville distribution in statistical mechanics.  In other words, it describes our knowledge about the quantum system rather than being an intrinsic property of an individual system.  If we do not believe that the forwards evolving state vector is a state of reality, than neither Price's aregument nor the two state vector formalism, provides evidence that we should view the backwards evolving state vector as a state of reality. The $\psi$-epistemic view provides natural explanations for many quantum phenomena that are otherwise puzzling, such as the indistinguishability of quantum states and the no-cloning theorem, which have been discussed in detail elsewhere \cite{Spekkens2007, Leifer2014}.  Although several recent results \cite{Pusey2012, Aaronson2013, Barrett2013, Branciard2014, Colbeck2012, Colbeck2013a, Hardy2013, Leifer2014a, Leifer2013c, Mansfield2014, Maroney2012, Maroney2012a, Montina2015} (see \cite{Leifer2014} for a review) cast doubt on the viability of the $\psi$-epistemic view, all of these results assume that there is no retrocausality, so allowing for retrocausality is a potentially appealing way of saving the $\psi$-epistemic view.  Because of this, it would be preferable not to rule out $\psi$-epistemic theories a priori when making an argument for retrocausality.

The first aim of this paper, is to construct an alternative to Price's argument, which does not assume the reality of the quantum state.  This turns out to be possible using a different assumption, that we call $\bm{\lambda}$\textbf{-mediation}, which plausibly holds independently of the status of the quantum state.  Briefly, it says that any correlations between a preparation and a measurement made on a system should be mediated by the physical properties of the system.  There is a close parallel between the relationship between Price's argument and ours, and the relationship between the Einstein-Podolsky-Rosen (EPR) argument and Bell's theorem.  In fact, our assumptions imply a timelike analogue of Bell's local causality condition.  It is well-known that sequences of measurements made on the same system can violate Bell inequalities, the significance of which has long been debated \cite{Leggett1985, Brukner2004, Taylor2004, Lapiedra2006, Spekkens2009, Avis2010, Fritz2010, Budroni2013, Zukowski2013, Budroni2014, Markiewicz2014, Maroney2014, Brierley2015}.  Our argument provides a new spin on this --- temporal Bell violations can be viewed as proofs of the impossibility of a time-symmetric ontology (unless we allow retrocausality).

In fact, it is possible to prove a weaker result even without $\bm{\lambda}$\textbf{-mediation}, based on the Colbeck-Renner theorem \cite{Colbeck2011, Colbeck2012a}, which we discuss in \cref{app}.

The second aim of this paper is to clarify and extend the scope of the other assumptions behind Price's argument.  Like Bell's theorem, we are aiming for a result that is independent of the details of quantum theory, so that we can say whether or not a given operational theory allows for a time symmetric ontology.  In order to do this, we have to replace one of Price's assumptions, which he calls \emph{Discreteness} by a broader principle that can be formulated at this level of generality.  Doing so clarifies the precise notion of time-symmetry that is at stake in the argument.

In particular, an obvious objection to both Price's and our argument is that unitary evolution according to the Schr{\"o}dinger equation is time symmetric, in the conventional sense of time-symmetry of dynamical laws.  Therefore, an interpretation like Everett/many-worlds \cite{Everett1957, Wallace2012, Vaidman2016}, in which a unitarily evolving quantum state is the entire ontology, is time-symmetric without requiring retrocausality.  The same is true of interpretations that supplement this picture with additional variables that also evolve under time-symmetric equations of motion, such as de Broglie-Bohm theory \cite{Broglie2009, Bohm1952, Bohm1952a, Duerr2009}.  Hence, a direct inference from the conventional notion of time symmetry to retrocausality is out of the question.  

In Price's version of the argument, these theories are ruled out by \emph{Discreteness}.  This says that if a photon is detected at one of the output ports of a polarizing beamsplitter then nothing comes out of the other ports of the beamsplitter.  This assumption is needed so that a photon exiting a beamsplitter and then being detected can be treated as the time reverse of inserting a photon into a beamsplitter through a definite port.  Everett and de Broglie-Bohm violate this assumption because, when a photon exits a beamsplitter, there are branches of the quantum state that exit along each of the ports.  These represent what will become different worlds in Everett, and they are empty waves that do not contain a particle in de Broglie-Bohm.  

We also note that there is some tension between the \emph{Discreteness} and \emph{Realism} assumptions.  \emph{Realism} requires that the quantum state of the system should be real while it is travelling between the two beamsplitters.  If this is true then it would be natural to require that the quantum state exiting the second beamsplitter should also be real, even though this is not strictly required by \emph{Realism}.  However, this would contradict \emph{Discreteness}.  The tension is removed if we allow for $\psi$-epistemic models, since then the branch of the quantum state that exits the port of the beamsplitter on which the photon is not detected need not correspond to anything real \footnote{In any case, the relevant time symmetry does not require \emph{Discreteness}.  Briefly, if we allow that the vacuum may be represented by a probability distribution over more than one ontic state then there is something nontrivial that enters both ports of the first beamsplitter when we insert the photon along a definite port.  Then, having something exit both ports of the second beamsplitter need not prevent the necessary time symmetry.  See \cite{Martin2016} for an explanation of quantum interference along these lines.}.

In our view, the symmetry between preparations and measurements in quantum theory is the main motivation for \emph{Discreteness}, so we prefer to base our version of the argument on the former, as it makes it clearer that the argument involves a stronger notion of time symmetry than the conventional one.  \emph{Discreteness} itself would be difficult to formulate at the level of generality we are trying to achieve in any case.  It is known that sequences of quantum experiments can be described by a retrodictive formalism \cite{Barnett2000, Pegg2002, Pegg2002a, Leifer2013a}, in which quantum states are associated with measurement outcomes and are evolved backwards in time to preparations, which are in turn described by what we normally think of as ``measurement'' operators.  The symmetry in question is that the retrodictive formalism is mathematically identical to the conventional predictive formalism, in which quantum states are evolved forwards in time from preparation to measurement, so every quantum experiment has a kind of time reverse, in which the roles of the predictive and retrodictive operators are exchanged.  A simple example of follows from the observation that $|\braket{\phi|\psi}|^2 = |\braket{\psi|\phi}|^2$.  Consider two experiments, the first of which consists of preparing one state uniformly at random from a complete orthonormal basis $\{\ket{\psi_n}\}$, and then measuring in another orthonormal basis $\{\ket{\psi_m}\}$.  In the second experiment, the order is reversed, so that a state is prepared uniformly at random from $\{\ket{\phi_m}\}$ and then measured in the basis $\{\ket{\psi_n}\}$.  Because $|\braket{\phi_m|\psi_n}|^2=|\braket{\psi_n|\phi_m}|^2$, the joint probability of obtaining $\ket{\psi_n}$ and $\ket{\phi_m}$ is the same in both experiments.  Hence, the predictive probability of obtaining $\ket{\phi_m}$ given that $\ket{\psi_n}$ was prepared in the first experiment is the same as the retrodictive probability of having prepared $\ket{\phi_m}$ given that the measurement outcome was $\ket{\psi_n}$ in the second experiment. Our \textbf{Time Symmetry} assumption says that this symmetry, which is a symmetry of the operational predictions of quantum theory, should be taken seriously as a fundamental symmetry, and therefore imposed at the ontological level as well.  We formulate this assumption rigorously, independently of the details of quantum theory, and base our argument on it.  Note that \textbf{Time Symmetry} is different from the usual notion of time symmetry of dynamical laws, and neither Everett or de Broglie-Bohm satisfy the former.

Given this, one might question whether \textbf{Time Symmetry} is really necessary.  However, our assumption does bear a family resemblance to the assumptions behind other no-go theorems that are often regarded as reasonable.  In particular, Spekkens' notion of noncontextuality says that experimental procedures that are operationally indistinguishable ought to be represented in the same way at the ontological level \cite{Spekkens2005}.  \textbf{Time Symmetry} could be motivated by a principle that extends this to say that a symmetry of the operational predictions ought to also hold at the ontological level.  Spekkens' principle is then just the special case where the symmetry is identity of operational predictions.  Whilst one might quibble about applying this principle to every possible symmetry of the operational predictions, some of which may be accidental, it seems reasonable to apply it to fundamental symmetries, and time symmetry is pretty fundamental.  We therefore think that it is surprising that a conventional realist (non retrocausal) interpretation cannot satisfy \textbf{Time Symmetry}, and that the fact that Everett and de Broglie-Bohm do not is a genuine deficiency of those theories.

The remainder of this paper is structured as follows.  In \S\ref{TS}, we outline several notions of time symmetry and explain how the notion used in this paper differs from the conventional one.  \S\ref{OT} introduces our operational formalism for prepare-and-measure experiments.  \S\ref{OpTS} formally defines our operational notion of time symmetry and shows that quantum theory satisfies it.  \S\ref{OM} and \S\ref{OnTS} mirror this structure at the ontological level.  \S\ref{OM} introduces our main assumptions about realist models of operational theories, including \textbf{No Retrocausality} and \textbf{$\bm{\lambda}$-mediation}, and \S\ref{OnTS} defines our notion of ontological time symmetry.  \S\ref{TSA} explains our \textbf{Time Symmetry} assumption, which is that operational time symmetries should be reflected at the ontological level.  \S\ref{MR} contains our main results.  \S\ref{MT} proves that models satisfying our assumptions must obey Bell's local causality condition for timelike experiments.  \S\ref{QE} gives a simple example of how quantum theory violates this, and \S\ref{QG} shows that there is an isomorphism between spacelike and timelike experiments, which allows us to port all existing results about violation of bipartite Bell inequalities with entangled states to the timelike context.  In \S\ref{Price}, we compare our assumptions with Price's, and reformulate the logic of his argument in our framework.  \S\ref{EdBB} explains why de Broglie-Bohm and Everett do not satisfy \textbf{Time Symmetry}, but this does not conflict with their dynamical time symmetry.  \S\ref{Spekkens} discusses the relation of our main result with Spekkens' notion of preparation contextuality, and discusses the relevance of experiments that have recently been conducted to test this.  In \S\ref{Conc}, we conclude by discussing which assumption ought to be given up in the light of our results.  \Cref{app} discusses how the Colbeck-Renner theorem can be used to obtain a weaker result without \textbf{$\bm{\lambda}$-mediation}.

\section{Time symmetry}

\label{TS}

The usual notion of time symmetry used in physics is a symmetry of the dynamics, wherein there is a one-to-one correspondence between solutions to the equations of motion and solutions to the modified equations obtained by replacing $t$ with $-t$.  However, this is not the only notion of time symmetry worth considering.  We start from the more informal idea that time symmetry is the inability to tell whether a video of a physical process is running forwards or backwards, and note that this can be formalized in several different ways.

First of all, it matters if one really means an actual video, which can only capture things that are observable, or whether the video is merely a conceptual stand in for a complete record of everything that exists.  We call the former kind of time symmetry an \emph{operational} time symmetry and the latter an \emph{ontological} time symmetry.  Our argument makes use of both kinds of time symmetry, and is based on the assumption that one should take an operational symmetry as evidence for an ontological one.  We formalize operational time symmetry in \S\ref{OpTS}, after introducing our operational formalism in \S\ref{OT}, and we formalize ontological time symmetry in \S\ref{OnTS}, after we have introduced the ontological framework in \S\ref{OM}.

Secondly, it makes a difference whether one is only interested in whether a given video is \emph{possible} in both time directions, or whether there is a process that makes it equally \emph{likely} in both time directions.  A smashed glass spontaneously reconstituting itself is not impossible according to the laws of physics.  According to the usual notion of dynamical time symmetry, a spontaneously reconstituting glass is a possible solution to the equations of motion whenever an unbroken glass smashing into pieces is.  However, a spontaneously reconstituting glass is vastly less likely, given the low entropy initial conditions of the universe.  In light of this, we can consider a stronger notion of time symmetry in which we also demand that the probability of occurrence should be the same in both time directions.  This holds, for example, for systems that are in thermal equilibrium.  It is this stronger notion of time-symmetry that we will be considering here.  Of course, the universe as a whole does not display such time-symmetry, and we will not be assuming that it does.  However, it is possible to construct experiments that have this time-symmetry at the operational level.  Our main assumption will be that, if this is the case, the same symmetry ought to hold at the ontological level as well.

\section{Operational formalism}

\label{OT}

We will be discussing experiments consisting of three stages, a preparation $P$, a transformation $T$, and a measurement $M$ (see \cref{opfig}). Each preparation and measurement has an input, which is under the control of the experimenter, and an output, which they do not directly control, i.e.\ the outputs may be correlated with the inputs, but the experimenter does not have any further control over the outputs other than via choosing the inputs.  As an example, in Price's argument, both the preparation and measurement devices are polarizing beam splitters and, in both cases, the inputs are the choice of angle at which to orient the beamsplitter.  For the measurement, the output is which output port of the beamsplitter the photon is detected on.  For the preparation, the output is the choice of which \emph{input} port to insert the photon into the beamsplitter from.  Note here that our distinction between input and output should be thought of as the distinction between controlled and uncontrolled variables, and does not necessarily represent the order in which things happen to the photon.  The photon is inserted into the input port before it encounters the beamsplitter oriented at a certain angle.  Nonetheless, the angle is our ``input'' and the port is our ``output'' because the former is controlled by the experimenter and the latter is not.  The choice of input port is in principle controllable by the experimenter but, in the specific setup used by Price, it is not actually controlled by them, but rather by a third party, which Price calls the ``demon of the left''.  For this reason, it is important not to think of the input as consisting of everything that is in principle controllable by the experimenter and the outputs as everything that is in principle uncontrollable, but rather the inputs are the variables that they actually control and the outputs are variables that they do not actually control, even if some of them are in principle controllable.  To make the argument work, it is necessary to place constraints on what the experimenter controls.  We shall return to the reason and justification for this in \S\ref{OpTS}.

Although, in Price's argument, the output of the preparation represents something that happens to the photon before the input, it is important to note that the input is always chosen before the output.  The experimenter decides how to orient the beamsplitter and then the ``demon of the left'' chooses on which port to insert the photon.  In general, we should think of preparations and measurements as black-box devices, which have a setting switch that allows the experimenter to choose the input and a pointer or screen that subsequently displays the output.  Regardless of what is happening inside the black box, the operational act of choosing the input always happens before the output is displayed, so, in a non retrocausal theory, the input can causally influence the output but not vice versa.

\begin{figure}[!htb]
\includegraphics[scale=0.8]{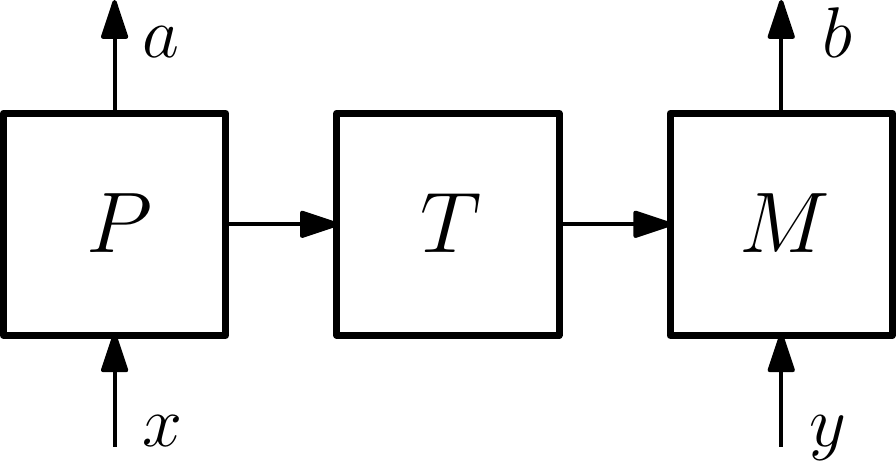}
\caption{\label{opfig}Illustration of prepare-transform-measure experiment.}
\end{figure}

The final piece of the setup is the transformation, which just represents the dynamics of the system between preparation and measurement.  In Price's example, this is trivial as the polarization degree of freedom does not evolve as it passes between the two beamsplitters, but we will consider what happens when dynamics, which may include decoherence and losses to the environment, is added to the picture.

An \emph{operational theory} specifies a set $\Ps$ of possible preparations, a set $\Ms$ of possible measurements, and a set $\Ts$ of possible transformations. In general, a theory may include different types of systems, e.g.\ in quantum theory we have photons, atoms, etc.\ that have different degrees of freedom and are described by Hilbert spaces of differing dimensionality.  Therefore, not every preparation, transformation, and measurement need be compatible with one another, e.g.\ it does not make sense to use a device for measuring the energy of an atom to measure the polarization of a photon.  We call a triple $(P,M,T)$ of preparation, measurement, and transformation that are compatible with each other an \emph{experiment}.  For every experiment $(P,M,T)$, the operational theory must also specify a prediction of the probabilities of the output variables of $P$ and $M$, given the choice of input variables.  We write these probabilities as $p_{PMT}(a,b|x,y)$, where $x$ is the input of $P$ and $a$ is its output, and $y$ is the input of $M$ and $b$ is its output.

For example, in quantum theory, a preparation $P$ is associated with a Hilbert space $\Hilb[A]$, a set of (unnormalized) density operators $\{\rho_{aA|x}\}$ on $\Hilb[A]$ --- one for each choice of $x$ and $a$ --- such that the ensemble average density operators $\rho_{A|x} = \sum_a \rho_{aA|x}$ are normalized $\Tr{\rho_{A|x}} = 1$.  The preparation procedure starts with the experimenter choosing $x$.  The preparation device then generates a classical variable $a$ with probability distribution $p(a|x) = \Tr{\rho_{aA|x}}$, outputs $a$, and prepares the system in the corresponding (normalized) state $\rho_{aA|x}/p(a|x)$, which is subsequently fed into the transformation device.

A measurement $M$ is described by a set $\{E_{b|yB}\}$ of Positive Operator Valued Measures (POVMs) on a Hilbert space $\Hilb[B]$ --- one POVM for each choice of $y$.  This means that for each $(y,b)$, $E_{b|yB}$ is a positive operator, and for all $y$, $\sum_b E_{b|yB} = I_B$, where $I_B$ is the identity operator on $\Hilb[B]$.

A transformation is described by a completely-positive, trace-preserving (CPT) map $\mathcal{E}$ from the density operators on the input Hilbert space $\Hilb[\text{in}]$ to the density operators on the output Hilbert space $\Hilb[\text{out}]$.  We allow $\Hilb[\text{in}]$ and $\Hilb[\text{out}]$ to differ because the transformation may involve discarding part of the system into the environment or adding ancillas.

A triple $(P,M,T)$ is an experiment only if $\Hilb[\text{in}] = \Hilb[A]$ and $\Hilb[\text{out}] = \Hilb[B]$, in which case we write the CP-map as $\mathcal{E}_{B|A}$.

For such an experiment, by the Born rule, quantum theory predicts the probabilities
\begin{equation}
    p_{PMT}(a,b|x,y) = \Tr{E_{b|yB} \mathcal{E}_{B|A} \left ( \rho_{aA|x}\right )}.
\end{equation}

\section{Operational time-symmetry}

\label{OpTS}

\begin{definition}
  \label{def:opTR}
  An experiment $(P,M,T)$ has an \emph{operational time reverse} if there exists another experiment $(P',M',T')$ where $P'$ has the same set of inputs and outputs as $M$, $M'$ has the same set of inputs and outputs as $P$, and
  \begin{equation}
    p_{P'M'T'}(b,a|y,x) = p_{PMT}(a,b|x,y)
  \end{equation}

  An operational theory has \emph{operational time symmetry} if every experiment has an operational time reverse.
\end{definition}

For each run of an experiment, consider a record of the inputs and outputs of the preparation and measurement $(x,a,y,b)$.  The experiment is repeated an arbitrarily large number of times with each choice of inputs and the records are presented to a third party, without telling them what the preparation, measurement, and transformation actually is.  We also give them the same records with the order of preparation and measurement variables reversed $(y,b,x,a)$.  If, knowing the theory, they cannot tell which is the true record and which is the reversed one, then the theory has operational time symmetry.

One might think that inputs and outputs also ought to be swapped in the reversed record.  However, recall that our notion of input and output does not directly relate to time ordering, but is rather to do with controlled vs.\ uncontrolled variables.  For a preparation, all variables are in principle controllable, so the specification of which variables are inputs and which are outputs is fairly arbitrary at this stage.  However, we shall be particularly interested in experiments in which the experimenter has the same degree of control over a measurement as they do over a preparation, in which case the distinction between controlled and uncontrolled variables will be the same in both the true record and the time reversed one.

Note that, we are not assuming that every experiment \emph{ought} to have an operational time reverse.  At this point, it is just a definition.  In fact, it would be unnatural to impose this for the simple reason that it is possible to send a signal forwards in time, but not backwards in time.  Formally, we expect an operational theory to satisfy
\begin{equation}
    \label{eq:nosigforwards}
    \sum_b p_{PMT}(a,b|x,y) = \sum_b p_{PMT}(a,b|x,y'),
\end{equation}
for all $y$ and $y'$, so that the output of the preparation cannot be used to infer any information about the input of the measurement, but there is no reason why the output of the measurement should not contain information about the input of the preparation, so we do not expect
\begin{equation}
    \sum_a p_{PMT}(a,b|x,y) = \sum_a p_{PMT}(a,b|x',y),
\end{equation}
to hold in general for $x \neq x'$.  If it does not hold, then \cref{eq:nosigforwards} prevents the experiment from having an operational time reverse.

However, we do not believe that the lack of operational time symmetry is a fundamental asymmetry of physics, but is rather a consequence of the thermodynamic arrow of time.  Specifically, an operational description of an experiment is, by definition, a description of the experiment from the point of view of the experimenter.  The experimenter has a subjective arrow of time.  This means that she can remember the past but not the future, and also that her past seems fixed but her future seems open and unpredictable.  It is usually thought that the subjective arrow of time is a consequence of the thermodynamic arrow because both memory and control consume a supply of external low entropy systems.  For example, turning a knob from an initially unknown position to a definite setting is an erasure process, which increases entropy by Landauer's principle \cite{Landauer1961}.

Because of this, an experimenter typically has more control over the preparation than the measurement, but there is reason to believe that this is an emergent, rather than fundamental, asymmetry.  Hypothetically, if it were possible for the experimenter to have a subjective arrow of time pointing in the opposite direction to the arrow of time used to describe the rest of the experiment, then presumably they would be able to control the measurement outcome to the same degree that we can control the preparation.

In order to identify the fundamental operational time symmetries, we want to factor out this emergent asymmetry in some way.  One way of doing this would be to simulate full control over the measurement outcome by allowing post-selection.  Although we could do things this way, post-selection significantly complicates the analysis.  For our purposes, it is sufficient to instead restrict the experimenter's control over the preparation procedure, so that she cannot send signals forwards in time.  Doing this plays the same role as the ``demon of the left'' in Price's argument.

\begin{definition}
    A preparation $P$ is \emph{no-signalling} if, for all $M \in \Ms$ and $T \in \Ts$ such that $(P,M,T)$ is an experiment,
    \begin{equation}
        \sum_a p_{PMT}(a,b|x,y) = \sum_a p_{PMT}(a,b|x',y),
    \end{equation}
    for all $x$ and $x'$.

    $\Pns$ denotes the set of all no-signalling preparations in an operational theory.  The \emph{no-signalling sector} of an operational theory is the operational theory in which $\Ps$ is restricted to $\Pns$
\end{definition}

Note that the no-signalling sector of an operational theory need not be operationally time symmetric.  However, it is in both quantum theory and classical probability theory.  This is therefore a nontrivial time symmetry that our present physical laws obey, which is different from the conventional notion of time-symmetric dynamical laws.

\begin{theorem}
    \label{thm:QTS}
    The no-signalling sector of quantum theory is operationally time-symmetric.
\end{theorem}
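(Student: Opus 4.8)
The plan is to exhibit, for an arbitrary experiment $(P,M,T)$ in the no-signalling sector, an explicit operational time reverse built from the same Born-rule data. Writing the prediction as $p_{PMT}(a,b|x,y) = \Tr{E_{b|yB}\mathcal{E}_{B|A}(\rho_{aA|x})}$, the first step is to absorb the transformation into the measurement by passing to the Heisenberg picture: if $\mathcal{E}_{B|A}^\dagger$ denotes the (completely positive, unital) adjoint of the CPT map $\mathcal{E}_{B|A}$, then setting $\tilde{E}_{b|y} := \mathcal{E}_{B|A}^\dagger(E_{b|yB})$ gives a POVM on $\Hilb[A]$ with $\sum_b \tilde{E}_{b|y} = \mathcal{E}_{B|A}^\dagger(I_B) = I_A$, and $p_{PMT}(a,b|x,y) = \Tr{\tilde{E}_{b|y}\rho_{aA|x}}$. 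This reduces everything to a trivial transformation on the single Hilbert space $\Hilb[A]$.

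Next I would extract the only consequence of no-signalling that is needed. Since $P \in \Pns$, the marginal $\sum_a p_{PMT}(a,b|x,y) = \Tr{E_{b|yB}\mathcal{E}_{B|A}(\rho_{A|x})}$ must be independent of $x$, where $\rho_{A|x} := \sum_a \rho_{aA|x}$; because $E$ and $\mathcal{E}$ range over all POVMs and channels, testing against a trivial channel and arbitrary $0 \le E \le I$ forces the ensemble-average state $\rho := \rho_{A|x}$ to be independent of $x$. The construction of the time reverse is then a symmetric sandwich: define the preparation $P'$ (inputs $y$, outputs $b$, matching $M$) by the states $\sigma_{bB'|y} := \rho^{1/2}\tilde{E}_{b|y}\rho^{1/2}$, the measurement $M'$ (inputs $x$, outputs $a$, matching $P$) by the operators $F_{a|xA'} := \rho^{-1/2}\rho_{aA|x}\rho^{-1/2}$, and let $T'$ be trivial. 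One checks that $\{\sigma_{bB'|y}\}$ is a legitimate no-signalling preparation, since $\sigma_{bB'|y} \ge 0$ and $\sum_b \sigma_{bB'|y} = \rho^{1/2}I\rho^{1/2} = \rho$ is normalized and independent of $y$; that $\{F_{a|xA'}\}$ is a legitimate POVM, since $F_{a|xA'} \ge 0$ and $\sum_a F_{a|xA'} = \rho^{-1/2}\rho\rho^{-1/2} = I$; and finally, by cyclicity of the trace, $\Tr{F_{a|xA'}\sigma_{bB'|y}} = \Tr{\rho_{aA|x}\tilde{E}_{b|y}} = p_{PMT}(a,b|x,y)$, which is exactly $p_{P'M'T'}(b,a|y,x)$. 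Since this time reverse again lies in the no-signalling sector, every experiment there has an operational time reverse and the theorem follows.

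The one genuine obstacle is that $\rho^{-1/2}$ only exists when $\rho$ is full rank. I would handle this by working on $\Hilb[A'] := \mathrm{supp}(\rho)$ and reading $\rho^{-1/2}$ as the inverse on this subspace: because each $\rho_{aA|x} \ge 0$ is supported inside $\mathrm{supp}(\rho)$, and each $\sigma_{bB'|y}$ is manifestly supported there as well, all the identities above go through with $I$ read as the identity on $\Hilb[A']$. It is worth emphasising what makes the construction work: conjugating by $\rho^{\pm 1/2}$ (rather than transposing) turns POVM elements into sub-normalised states and states into POVM elements while preserving positivity, so no non-completely-positive operation is ever required and the trace pairing $\Tr{E\rho}$ is preserved exactly. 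This is the representation-independent version of the elementary identity $|\braket{\phi|\psi}|^2 = |\braket{\psi|\phi}|^2$ quoted in the introduction, with the maximally-mixed special case $\rho = I/d$ (for which the conjugation is a mere rescaling) recovering that example directly.
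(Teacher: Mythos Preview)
Your proof is correct and shares the same core idea as the paper's---the ``quantum Bayes' rule'' that conjugation by $\rho^{\pm 1/2}$ swaps states and POVM elements while preserving the Born-rule pairing---but you handle the transformation differently. You absorb $\mathcal{E}_{B|A}$ into the measurement via the Heisenberg adjoint and then take $T'$ to be the identity on $\mathrm{supp}(\rho_A)$, so everything lives on a single Hilbert space. The paper instead keeps the three-stage structure intact: it defines $P'$ on $\Hilb[B]$ using $\rho_B = \mathcal{E}_{B|A}(\rho_A)$, defines $M'$ on $\Hilb[A]$ using $\rho_A$, and constructs an explicit reversed channel $\mathcal{E}_{A|B}(\cdot) = \rho_A^{1/2}\mathcal{E}_{B|A}^\dagger\bigl(\rho_B^{-1/2}(\cdot)\rho_B^{-1/2}\bigr)\rho_A^{1/2}$ (the Petz recovery map) for $T'$. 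Your route is more elementary, since you never need to check that this map is CPT (which the paper defers to the Choi--Jamio{\l}kowski isomorphism and \cite{Leifer2013a}); the paper's route has the conceptual payoff of exhibiting a nontrivial retrodictive channel, which ties the result directly to the retrodictive formalism discussed in the introduction and makes the time reverse look structurally like the original experiment rather than a collapsed prepare-and-measure on a single space.
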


\begin{proof}
    First, note that a quantum preparation is no-signalling iff the ensemble average state is the same for all inputs, otherwise there would be a measurement that predicts different probabilities for different preparation inputs.  Therefore, we can define $\rho_A = \rho_{A|x}$, which is independent of $x$.

    Given an experiment $(P,M,T)$, where $P$ is no-signalling, we have to define an operational time reverse $(P',M',T')$ that predicts the same probabilities.  To do this, we make use of quantum versions of Bayes' rule, as described in \cite{Leifer2013a}.

    Specifically, let $\rho_{aA|x}$ be the states associated with $P$, let $E_{b|yB}$ be the POVM elements associated with $M$, and let $\mathcal{E}_{B|A}$ be the CPT map associated with $T$.  We first define the POVM elements $E_{a|xA}$ associated with $M'$ via
    \begin{equation}
        \label{eq:retroPOVM}
       E_{a|xA} = \rho_A^{-\frac{1}{2}} \rho_{aA|x} \rho_A^{-\frac{1}{2}}.
    \end{equation}
    This should be understood as an equation on the subspace $\text{supp}(\rho_A)$ of $\Hilb[A]$, so that $\rho_A$ has a well defined inverse.  It is easy to check that these operators are positive and $\sum_a E_{a|xA} = I_{\text{supp}(\rho_A)}$.  If we want $\Hilb[A]$ to be the Hilbert space associated with $M'$ then the operators $E_{a|xA}$ may be extended by adding an arbitrary POVM on the orthogonal subspace to them.  Alternatively, we can just let $\text{supp}(\rho_A)$ be the Hilbert space of $M'$.

    Next, it is convenient to define the ensemble average state at the output of the transformation as $\rho_B = \mathcal{E}_{B|A}(\rho_A)$.  Then we can define the states associated with $P'$ as
    \begin{equation}
    \label{eq:retrostates}
    \rho_{bB|y} = \rho_B^{\frac{1}{2}} E_{b|yB} \rho_B^{\frac{1}{2}}.
    \end{equation}
    It is easy to check that these operators are positive and $\sum_b \rho_{bB|y} = \rho_B$, so the ensemble average is a normalized density operator.

    Finally, we define the CPT map associated with $T'$ as
    \begin{equation}
    \chn_{A|B}(\, \cdot \,) = \rho_A^{\frac{1}{2}} \mathcal{E}_{B|A}^{\dagger} \left ( \rho_B^{-\frac{1}{2}} \left ( \, \cdot \, \right ) \rho_B^{-\frac{1}{2}} \right ) \rho_A^{\frac{1}{2}},
    \end{equation}
    where $\chn^{\dagger}_{B|A}$ is the adjoint of $\chn_{B|A}$, i.e.\ the unique map that satisfies $\Tr{M_A\chn_{B|A}^{\dagger}(N_B)} = \Tr{N_B \chn_{B|A}(M_A)}$ for all operators $M_A$ on $\Hilb[A]$ and all operators $N_B$ on $\Hilb[B]$.  The map $\mathcal{E}_{A|B}$ should be understood as a map from density operators on $\text{supp}(\rho_B)$ to density operators on $\Hilb[A]$, and we can extend it to $\Hilb[B]$ arbitrarily if necessary.  The Choi-Jamio{\l}kowski isomorphism \cite{Jamiolkowski1972, Choi1975} can be used to verify that this is a well-defined CPT map, as was done in \cite{Leifer2013a}.

    With these definitions, it is straightforward to verify that
    \begin{multline}
        p_{P'M'T'}(b,a|y,x) = \Tr{E_{a|xA}\chn_{A|B}(\rho_{bB|y})} \\ = \Tr{E_{b|yB}\chn_{B|A}(\rho_{aA|x})} = p_{PMT}(a,b|x,y),
    \end{multline}
    as required.
\end{proof}

\section{Ontological Formalism}

\label{OM}

So far, we have only been discussing the operational predictions of a theory.  We now wish to consider the possibility that the system has some real physical properties between preparation and measurement, that are responsible for mediating any correlations between the two.

There is a standard framework for discussing this, known as the \emph{ontological models framework} \cite{Harrigan2010}.  We think that this is a well motivated framework, but because the ontological models framework implicitly assumes that there is no retrocausality, and the necessity of retrocausality is precisely the point at issue, we will break things down into four more primitive assumptions, so that the role of the no retrocausality assumption can be seen more clearly.  This will also facilitate a comparison with Price's assumptions in \S\ref{Price}.

We will use influence diagrams (a generalization of Bayesian networks that allows for decision nodes) \cite{Howard2005} to illustrate our assumptions, but familiarity with such diagrams is not assumed.

\begin{assumption}
    (Single-world) \textbf{Realism}: The system has some physical properties, a specification of which is called its \emph{ontic state}, denoted $\lambda$.  Ontic states take values in a (measurable) set $\Lambda$ called the \emph{ontic state space} \footnote{We adopt ``physicists' rigour'' and ignore measure-theoretic complications here.  They can be handled along the lines of the treatment in \cite{Leifer2014}.}.

    On each run of an experiment, the operational variables (inputs $x$ and $y$, and outputs $a$ and $b$) and the ontic state $\lambda$ each take a definite value.  Our uncertainty about these variables can be described according to classical probability theory.
\end{assumption}

The idea here is that $\lambda$ represents properties of the system that exist, but may not be directly observable.  For example, in a classical experiment in statistical mechanics, we may only have operational access to variables describing the macrostate; such as pressure, volume, and temperature; and the ontic state would be the microstate, i.e.\ a specification of the positions and momenta of each particle.  The idea is that perhaps the ways in which we can manipulate a quantum system are limited in a similar way.

The exact ontic state may be unknown to the experimenter, but it is in principle knowable, even if we cannot actually construct an experiment that would reveal it.  It is for this reason that we consider the use of classical probability to be part and parcel of the definition of realism.  Because $\lambda$ takes a specific value in each run of experiment, you can in principle resolve a bet on its value and it has a definite relative frequency in a sequence of runs, so however you prefer to interpret probability theory, the standard arguments for its use apply.  Although it is sometimes posited that quantum theory should be understood in terms of a non-standard probability theory, we have yet to see an interpretation of such a theory that is realist in character.

Our realism is single world because we assume that the full specification of the ontology includes definite values for the operational variables, e.g.\ measurement outcomes.  This rules out Everett/many-worlds.  However, adopting Everett/many-worlds is not actually a way out of our no-go theorem, because, as explained in the introduction, it does not have the kind of time symmetry we are interested in.  Therefore, it is possible that the argument presented here is generalizable to interpretations that involve multiple worlds, but we do not know how to do this yet.

\begin{assumption}
    \textbf{Free Choice}:  The experiment can be described by a conditional probability distribution $p_{PMT}(a,b,\lambda|x,y)$.  We assume that the experimenter is free to choose $x$ and $y$ however they like.
\end{assumption}

What we mean by free choice of $x$ and $y$ is that the experimenter can set the probabilities $p(x,y)$ for the inputs in any way that she likes, so that the joint probabilities are $p_{PMT}(a,b,\lambda,x,y) = p_{PMT}(a,b,\lambda|x,y)p(x,y)$.  All possible combinations of inputs can be tested in as many runs as she likes \footnote{For this, it is sufficient to choose $x$ and $y$ independently and uniformly at random, as is done in an ideal Bell experiment.}.  In the language of influence diagrams, $x$ and $y$ are decision nodes, represented by squares in \cref{fig:noretro}.  In particular, $x$ and $y$ have no direct causes, i.e.\ no parents in the graph.

\begin{definition}
    Given an experiment $(P,M,T)$, a specification of an ontic state space $\Lambda$ together with a conditional probability distribution $p_{PMT}(a,b,\lambda|x,y)$ is called an \emph{ontic extension} of the experiment.  The ontic extension is required to reproduce the predictions of the operational theory upon marginalizing over $\lambda$, i.e.\
    \begin{equation}
        \int_{\Lambda} p_{PMT}(a,b,\lambda|x,y)d\lambda = p_{PMT}(a,b|x,y).
    \end{equation}

    We denote an ontic extension by $(P,M,T,\Lambda)$, leaving the probabilities implicit.

    A choice of ontic extension for each experiment in an operational theory is called an ontic extension of the theory.
\end{definition}

\textbf{Realism} and \textbf{Free Choice} are assumed in the vast majority of work on realist approaches to quantum theory.  We shall therefore tend to just assume that experiments have ontic extensions, rather than mentioning these two assumptions explicitly.

For the remainder of this section, we will concentrate on a single experiment $(P,M,T)$, so we will drop the subscripts on the probability distributions.

\begin{assumption}
    \textbf{No Retrocausality}: All non-input variables should be conditionally independent of input variables in their future, given a specification of all the variables in their past.
\end{assumption}

The possible causal dependencies that are compatible with no-retrocausality are illustrated in \cref{fig:noretro}.  Our assumption can be understood as saying that $p(a,b,\lambda|x,y)$ should obey the causal Markov condition with respect to this diagram.

\begin{figure}[!htb]
    \centering
    \includegraphics[width=0.9\columnwidth]{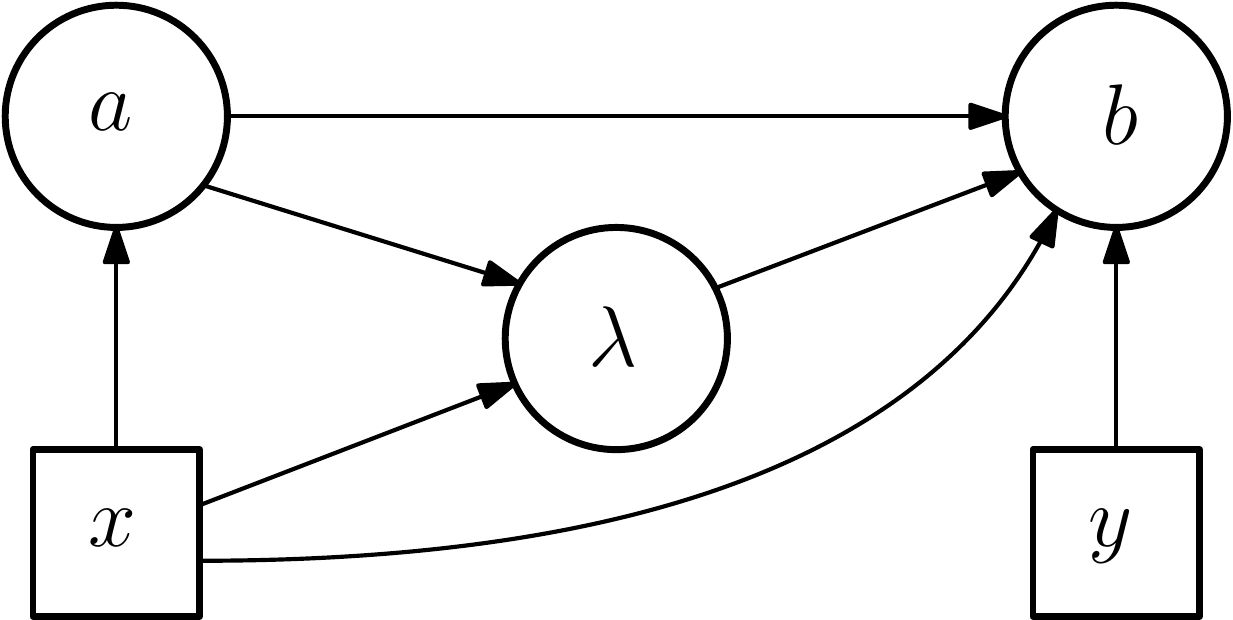}
    \caption{\label{fig:noretro}An influence diagram representing the possible causal influences in a model with no retrocausality.  A square represents a variable that is under the direct control of the experimenter and a circle represents a variable that they do not control.  An arrow between two nodes $u$ and $v$ in the diagram represents the possibility that $u$ can be a direct cause of $v$.}
\end{figure}

In more detail, a general conditional probability distribution $p(a,b,\lambda|x,y)$ can be decomposed as
\begin{equation}
    \label{eq:gendecomp}
    p(a,b,\lambda|x,y) = p(b|a,\lambda,x,y)p(\lambda|a,x,y)p(a|x,y).
\end{equation}

The \textbf{No Retrocausality} assumption implies that we should have $p(a|x,y) = p(a|x)$ and $p(\lambda|a,x,y) = p(\lambda|a,x)$.  The first of these two conditions says that there is no signalling from future to past, which we expect the operational theory to obey in any case.  The second is the more substantive assumption, often called \emph{measurement independence}, and we expect this to be violated in a theory with retrocausality.  The no-retrocausality assumption implies that the probabilities can be decomposed as
\begin{equation}
    \label{eq:noretrodecomp}
    p(a,b,\lambda|x,y) = p(b|a,\lambda,x,y)p(\lambda|a,x)p(a|x).
\end{equation}

\begin{assumption}
    \textbf{$\bm{\lambda}$-mediation}: The ontic state $\lambda$ mediates any remaining correlation between the preparation and the measurement, i.e.\ $p(b|a,\lambda,x,y) = p(b|\lambda,y)$ \footnote{In fact, our main result can be derived under the weaker assumption that the ontic state $\lambda$  and the preparation input $x$ mediate any remaining correlation between the preparation output and the measurement output, i.e.\ $p(b|a,\lambda,x,y) = p(b|\lambda,x,y)$.  However, we see no good reason to allow a direct causal connection from $x$ to $b$ while disallowing a causal connection from $a$ to $b$}.
\end{assumption}

In the language of influence diagrams, the arrows from $a$ and $x$ to $b$ are eliminated, and we are left with the causal structure illustrated in \cref{fig:mediation}.

\begin{figure}[!htb]
    \centering
    \includegraphics[width=0.9\columnwidth]{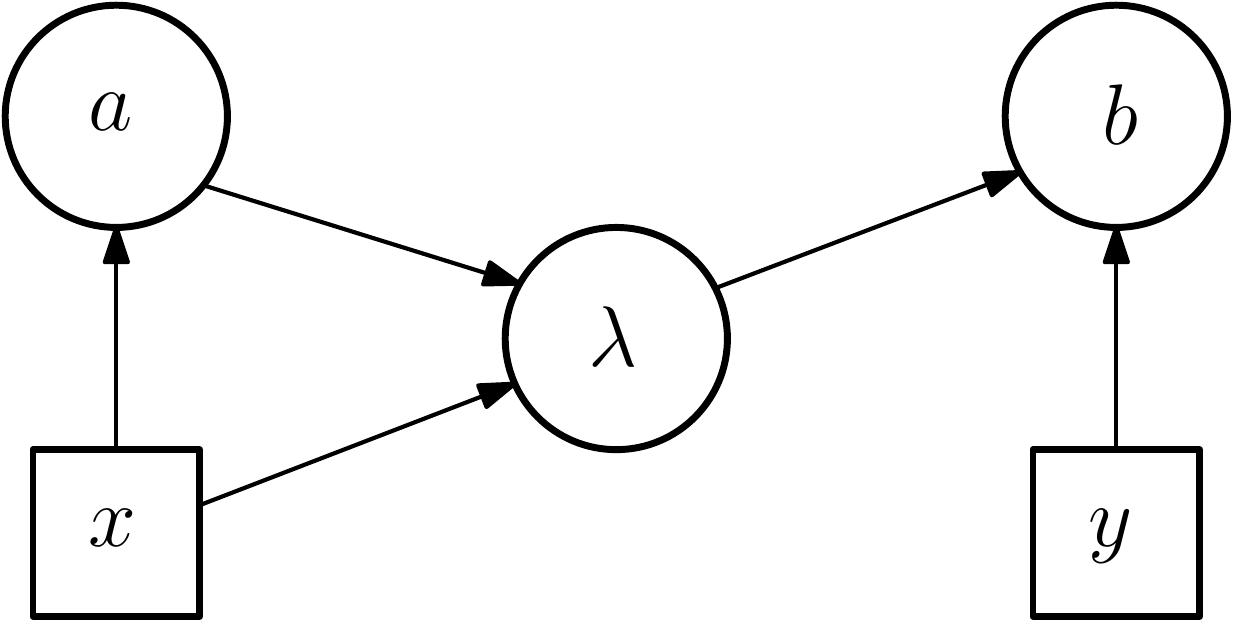}
    \caption{\label{fig:mediation}An influence diagram for an ontological model, which is an ontic extension that satisfies \textbf{No Retrocausality} and \textbf{$\bm{\lambda}$-mediation}.}
\end{figure}

The $\lambda$-mediation assumption is really just part of the definition of what we mean by an ontic state.  The properties of the system are supposed to be the cause of the correlation between preparation and measurement.  If there are other causal links between them, e.g.\ a telephone wire, then we are not really doing an experiment \emph{on} the system alone, but on the system in combination with something else.  If there are no such links, but we still allow correlation that is left unexplained by the ontic state, then we do not really have a realist model.

Together with \textbf{No Retrocausality}, this assumption implies that the probabilities can be decomposed as
\begin{equation}
    \label{eq:onticdecomp}
    p(a,b,\lambda|x,y) = p(b|\lambda,y)p(\lambda|a,x)p(a|x).
\end{equation}

\begin{definition}
    An ontic extension of an experiment that satisfies \textbf{No Retrocausality} and \textbf{$\bm{\lambda}$-mediation}, i.e.\ its probabilities decompose according to \cref{eq:onticdecomp}, is called an \emph{ontological model} of the experiment.
\end{definition}

For comparison with other work, note that $p(\lambda|a,x)$ is usually called the \emph{epistemic state} of the system and $p(b|\lambda,y)$ is called the \emph{response function} of the measuring device.  Normally, when proving theorems about realist accounts of quantum theory, the ontological models framework is simply assumed at the outset.  Here, we see that it is based on several nontrivial assumptions.

Note that, since Price's version of the argument does not assume \textbf{$\bm{\lambda}$-mediation}, it is of interest to discuss what we can prove without it.  We do so in \cref{app}

Note that we needed to state the \textbf{No Retrocausality} assumption before we could even formulate \textbf{$\bm{\lambda}$-mediation}.  In a theory with retrocausality, we would want to retain some notion that correlations between preparation and measurement are caused by the ontic properties of the system, but we have no license to infer conditional independencies from this if causal loops are possible.  Therefore, \textbf{$\bm{\lambda}$-mediation} would have to be replaced by something else.  What this might be will be discussed in future work.

\section{Ontological time symmetry}

\label{OnTS}

Our notion of ontological time symmetry can be formulated without assuming \textbf{No Retrocausality} or \textbf{$\bm{\lambda}$-mediation}.  This is fortunate, as it is something we might wish to preserve in a retrocausal theory.  Nonetheless, it is worth pausing to think about the role of $\lambda$ in an ontological model in order to understand how to formulate ontological time reversal.

For a classical system, $\lambda$ is often taken to be the phase space point occupied by the system, but this does not completely pin it down.  We could take $\lambda$ to be the phase space point immediately after the preparation, immediately before the measurement, or even let $\lambda(t)$ be the entire trajectory through phase space between preparation and measurement.  Given knowledge of the transformation $T$, which in this case is a specification of the Hamiltonian of the system, each of these choices is sufficient to screen off the preparation from the measurement.  There are many other choices besides these, such as specifying any part of the phase space trajectory, or dispensing with phase space entirely and specifying the configuration space point at two different times. Each of these choices would be equivalent and sufficient for proving our main result, as the screening off property is the main requirement.

Thus, in a general ontological model, we expect there to be multiple equivalent ways of specifying the ontic state.  However, they each have different implications for how to construct the time-reverse of an ontological model.  For example, if $\lambda$ is a specification of the properties of the system just after the preparation, then it should be thought of as a specification just before the measurement when we take the time reverse, and the direction of the momentum should be reversed.  If it is the entire trajectory then we also need to reverse the time direction of $\lambda(t)$ when we take the time-reverse.  In addition, it is conventional to allow other one-to-one transformations when we take a time reverse.  For example, the usual notion of time reversal in quantum theory involves mapping a quantum state $\ket{\psi}$ to its complex conjugate $\ket{\psi^*}$, so a similar transformation of the ontic state should happen in an ontic model in which the quantum state part of the ontology.  For these reasons, we should not expect that the ontic state $\lambda$ or even its state space $\Lambda$ are invariant under time reversal.  Instead, there should be a one-to-one mapping between the $\lambda$'s in the two different experiments.

Although an ontic extension need not be an ontological model, we still want to conceive of the ontic states as the same sorts of things in general.  Thus, we should allow a one-to-one correspondence between $\lambda$'s when we take the time reverse of an ontic extension.

\begin{definition}
    \label{def:ontTR}
    In an ontic extension of an operational theory, the ontic extension $(P,M,T,\Lambda)$ of an experiment $(P,M,T)$ has an \emph{ontological time reverse} if there exists another experiment $(P',M',T')$, with ontic extension $(P',M',T',\Lambda')$, where $P'$ has the same sets of inputs and outputs as $M$, $M'$ has the same sets of inputs and outputs as $P$, there exists a one-to-one map $f:\Lambda \rightarrow \Lambda'$, and
    \begin{equation}
        p_{P'M'T'}(b,a,f(\lambda)|y,x) = p_{PMT}(a,b,\lambda|x,y).
    \end{equation}

    An ontic extension of an operational theory is \emph{ontologically time symmetric} if every experiment has an ontological time reverse.
\end{definition}

Clearly, an ontic extension of an operational theory can only be ontologically time symmetric if the operational theory is itself operationally time symmetric, since we can obtain the operational predictions just by marginalizing over $\lambda$.

\begin{remark}
    \label{rem:samelambda}
    It is also true, mathematically at least, that if the ontic extension of an experiment has an ontological time reverse, then there also exists an extension in which $f$ is the identity.  We can construct such an extension from an arbitrary one by identifying the ontic state spaces of the time reverse pairs of experiments and setting $p_{P'M'T'}(b,a,\lambda|x,y)$ in the new extension to the value of $ p_{P'M'T'}(b,a,f(\lambda)|y,x)$ in the old extension.  Physically, these two extensions may tell very different stories about what happens between preparation and measurement, but mathematically we can always assume that $f$ is trivial without loss of generality, and will do so in what follows.
\end{remark}

\section{The time symmetry assumption}

\label{TSA}

We are now in a position to state our main assumption.

\begin{assumption}
    \textbf{Time Symmetry}: An ontic extension of an operational theory satisfies \textbf{Time Symmetry} if, whenever an experiment $(P,M,T)$ has an operational time reverse, the extension $(P,M,T,\Lambda)$ has an ontological time reverse.

    In particular, an ontic extension of an operationally time symmetric theory must be ontologically time symmetric.
\end{assumption}

The idea of this is that the most natural explanation for an operational time-symmetry in a theory is that it a reflection of an ontological time-symmetry.  Note that an experiment might have an operational time reverse for trivial reasons if it only records highly coarse-grained information about the true ontic state of affairs.  For example, suppose that the inputs of our preparation and measurement consist of choosing the position of a table in a room and the outputs consist of counting the number of plates on the table.  Suppose that all of the transformations we can do involve moving the table around violently so that some of the plates fall to the ground and are smashed.  This is a fundamentally time asymmetric theory (at the level of probabilities), but the asymmetry might not show up at the operational level if some of the transformations also involve bringing new plates from the kitchen and placing them on the table.  We would not expect the operational time symmetry of this theory to be reflected in the ontological description, as we know that there are plates being smashed but none being unsmashed.

The reason why this example is problematic is that the preparations and measurements that we can do are only giving us access to highly restricted information about the ontic state.  In particular, we are not being allowed to look at the floor to see if there are any smashed plates there.  The operational time symmetry is an accidental consequence of our restricted access, rather than a reflection of a fundamental symmetry.

However, if the no-signalling sector of an entire physical theory is operationally time symmetric then the time symmetry assumption seems much more well founded.  Recall that restricting to the no-signalling sector is merely designed to factor out the subjective arrow of time of the experimenter, which we do not believe to be a fundamental asymmetry, but is rather a consequence of the thermodynamic arrow.  If, after doing this, there is absolutely nothing you can do to determine the time direction, then this seems like grounds for positing a fundamental time symmetry.  It is only in this situation that we want to posit the time symmetry assumption.  \Cref{thm:QTS} shows that this is precisely the situation for quantum theory.

\section{Main results}

\label{MR}

In this section, we will be discussing a single experiment and its time reverse, which, due to Definitions~\ref{def:opTR} and \ref{def:ontTR}, and Remark~\ref{rem:samelambda}, can be taken to have the same ontic state space $\Lambda$ and the same probability distribution.  For this reason, we can drop the subscripts $P$, $M$, $T$ on the probability distribution.

\subsection{Main theorem}

\label{MT}

Our main theorem shows that an ontological model (i.e.\ an ontic extension satisfying \textbf{No Retrocausality} and \textbf{$\bm{\lambda}$-mediation}) of an experiment that satisfies \textbf{Time Symmetry} must obey a temporal analogue of Bell's local causality condition.

\begin{theorem}
    \label{thm:main}
    Let $(P,M,T)$ be an experiment that has an operational time reverse.  If its ontic extension satisfies \textbf{No Retrocausality}, \textbf{$\bm{\lambda}$-mediation}, and \textbf{Time Symmetry} then
    \begin{equation}
        \label{eq:lc}
        p(a,b,\lambda|x,y) =  p(a|x,\lambda)p(b|y,\lambda)p(\lambda).
    \end{equation}
\end{theorem}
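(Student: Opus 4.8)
The plan is to exploit two different factorizations of the \emph{same} joint distribution $p(a,b,\lambda|x,y)$: a ``forward'' one obtained from the experiment as given, and a ``backward'' one obtained from its time reverse. First I would apply \textbf{No Retrocausality} and \textbf{$\bm{\lambda}$-mediation} directly to $(P,M,T)$, which by \cref{eq:onticdecomp} yields the forward factorization $p(a,b,\lambda|x,y) = p(b|\lambda,y)\,p(\lambda|a,x)\,p(a|x)$. The content I want to extract here is that $b$ is screened off from $(a,x)$ given $(\lambda,y)$.

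Next, \textbf{Time Symmetry} guarantees that the experiment has an ontological time reverse $(P',M',T')$, and by \cref{rem:samelambda} I may take the map $f$ to be the identity, so that $p_{P'M'T'}(b,a,\lambda|y,x) = p(a,b,\lambda|x,y)$. Because $(P',M',T')$ is itself an experiment of the same theory, its ontic extension is also an ontological model, so I can apply \cref{eq:onticdecomp} to it with the roles of preparation and measurement exchanged (preparation input $y$, output $b$; measurement input $x$, output $a$). Combined with the equality of joint distributions this gives the backward factorization $p(a,b,\lambda|x,y) = p'(a|\lambda,x)\,p'(\lambda|b,y)\,p'(b|y)$, where the primed quantities are the response function, epistemic state, and output distribution of the reversed experiment. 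The content here is the mirror-image screening: $a$ is screened off from $(b,y)$ given $(\lambda,x)$.

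The heart of the argument is to combine these two one-sided screening properties. Marginalizing the backward factorization over $b$ and using $\sum_b p'(\lambda|b,y)p'(b|y) = p'(\lambda|y)$ gives $p(a,\lambda|x) = p'(a|\lambda,x)\,p'(\lambda|y)$, while marginalizing the forward factorization over $b$ (using $\sum_b p(b|\lambda,y)=1$) gives $p(a,\lambda|x) = p(\lambda|a,x)\,p(a|x)$, which is manifestly independent of $y$. Summing the former over $a$ and using $\sum_a p'(a|\lambda,x)=1$ shows $p(\lambda|x) = p'(\lambda|y)$; since the left side depends only on $x$ and the right only on $y$, both must equal a single fixed distribution $p(\lambda)$ that is independent of the inputs. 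Substituting back into the forward factorization, $p(\lambda|a,x)p(a|x) = p(a,\lambda|x) = p(a|\lambda,x)\,p(\lambda|x) = p(a|\lambda,x)\,p(\lambda)$, so that $p(a,b,\lambda|x,y) = p(b|\lambda,y)\,p(a|\lambda,x)\,p(\lambda)$, which is exactly \cref{eq:lc}.

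The step I expect to require the most care is twofold: justifying that the reversed experiment is itself an ontological model, so that \cref{eq:onticdecomp} may legitimately be applied to it, and then establishing that the weight $p(\lambda)$ carries no residual dependence on $x$ or $y$. This input-independence of $\lambda$ is precisely what upgrades the two separate screening conditions into the fully factorized local-causality form, and it is exactly here that the operational time symmetry does its essential work by forcing the forward and backward descriptions of the run to coincide.
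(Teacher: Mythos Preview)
Your argument is correct and follows essentially the same strategy as the paper: use \textbf{No Retrocausality} on the forward experiment and, via \textbf{Time Symmetry}, on the reversed one, compare the resulting marginals to force $p(\lambda|x,y)=p(\lambda)$, and then substitute back. The one structural difference is that the paper first isolates a lemma showing that $p(\lambda|x,y)=p(\lambda)$ and the two parameter-independence conditions $p(a|\lambda,x,y)=p(a|\lambda,x)$, $p(b|\lambda,x,y)=p(b|\lambda,y)$ follow from \textbf{No Retrocausality} and \textbf{Time Symmetry} \emph{alone}, invoking \textbf{$\bm{\lambda}$-mediation} only at the end (and only on the forward experiment) to supply outcome independence; you instead apply \textbf{$\bm{\lambda}$-mediation} to both the forward and reversed experiments from the outset. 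Your route is slightly more streamlined for this theorem, but the paper's organization buys something extra: the intermediate lemma without \textbf{$\bm{\lambda}$-mediation} is reused in the appendix to derive a weaker Colbeck--Renner-type conclusion when that assumption is dropped.
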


To understand the role of \textbf{$\bm{\lambda}$-mediation}, it is helpful to first see what we can prove without it.

\begin{lemma}
    \label{lem:nolambda}
    Let $(P,M,T)$ be an experiment that has an operational time reverse.  If its ontic extension satisfies \textbf{No Retrocausality} and \textbf{Time Symmetry} then
    \begin{align}
        p(\lambda|x,y) & = p(\lambda), \label{eq:measind} \\
        p(b|\lambda,x,y) & = p(b|\lambda,y), \label{eq:param1} \\
        p(a|\lambda,x,y) & = p(a|\lambda,x). \label{eq:param2}
    \end{align}
\end{lemma}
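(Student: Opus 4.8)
The plan is to use \textbf{Time Symmetry} to import the causal constraints that \textbf{No Retrocausality} imposes on the \emph{time-reversed} experiment into the original one, and then to combine the two sets of constraints. Following the convention of \S\ref{MR}, I treat the experiment and its time reverse as sharing a single distribution $p(a,b,\lambda|x,y)$: in the forward reading $(x,a)$ are the preparation input/output and $(y,b)$ the measurement input/output, while in the reversed reading — licensed by \cref{def:ontTR} and \cref{rem:samelambda}, which let us take the same $\Lambda$ and the same distribution — the pair $(y,b)$ plays the role of a preparation and $(x,a)$ the role of a measurement on that same $p$.

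First I would apply \textbf{No Retrocausality} to the original experiment, which by \cref{eq:noretrodecomp} supplies the no-future-to-past and measurement-independence conditions
\begin{align}
    p(a|x,y) &= p(a|x), \\
    p(\lambda|a,x,y) &= p(\lambda|a,x).
\end{align}
Since \textbf{No Retrocausality} is a property of the ontic extension of the whole theory, it applies equally to the time-reversed experiment $(P',M',T')$ guaranteed by \textbf{Time Symmetry}. There the roles of preparation and measurement are swapped, so reading the same assumption off \cref{fig:noretro} with $(y,b)$ as preparation and $(x,a)$ as measurement yields $p(b|x,y)=p(b|y)$ and the genuinely new constraint $p(\lambda|b,x,y)=p(\lambda|b,y)$, now as statements about the single shared $p$. (The first of these coincides with the no-forward-signalling condition that \cref{def:opTR} already requires for an operational time reverse to exist.)

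With these four conditional independences in hand, I would obtain \cref{eq:measind} by marginalizing $p(\lambda|x,y)$ in two complementary ways: summing over $a$ and using the forward conditions gives $p(\lambda|x,y)=\sum_a p(\lambda|a,x)p(a|x)$, manifestly independent of $y$, while summing over $b$ and using the reversed conditions gives $p(\lambda|x,y)=\sum_b p(\lambda|b,y)p(b|y)$, manifestly independent of $x$; independence of both inputs forces $p(\lambda|x,y)=p(\lambda)$. Then \cref{eq:param1,eq:param2} follow by Bayes' rule: writing $p(b|\lambda,x,y)=p(\lambda|b,x,y)p(b|x,y)/p(\lambda|x,y)$ and substituting the reversed conditions together with \cref{eq:measind} kills all $x$-dependence, and symmetrically $p(a|\lambda,x,y)=p(\lambda|a,x,y)p(a|x,y)/p(\lambda|x,y)$ with the forward conditions kills all $y$-dependence.

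The main obstacle is the middle step. One must check that \textbf{No Retrocausality} applied to the reversed experiment really does produce a new constraint — conditional independence of $\lambda$ from the \emph{measurement} input $x$ given $b$ and $y$ — which is \emph{not} a consequence of the forward assumption, and that it is precisely the distribution-level identification supplied by \textbf{Time Symmetry} that licenses treating it as a constraint on the same $p$ (without that identification the reversed experiment's constraints say nothing about the original). Getting the conditioning sets correct under the preparation/measurement swap is the only delicate point; everything after that is routine manipulation of conditional probabilities.
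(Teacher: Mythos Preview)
Your proposal is correct and follows essentially the same strategy as the paper: extract the forward constraints $p(a|x,y)=p(a|x)$, $p(\lambda|a,x,y)=p(\lambda|a,x)$ from \textbf{No Retrocausality}, use \textbf{Time Symmetry} to obtain the mirrored constraints with $(y,b)$ and $(x,a)$ interchanged, combine them to show $p(\lambda|x,y)$ depends on neither input, and then read off parameter independence via Bayes' rule. The paper routes the same manipulations through the joint decomposition \cref{eq:decomp1} and an intermediate Bayes step on $p(\lambda|a,x)$, whereas you marginalize and apply Bayes directly to $p(b|\lambda,x,y)$ and $p(a|\lambda,x,y)$, but this is only a cosmetic reorganization of the same argument.
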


\begin{proof}
    By \textbf{No Retrocausality}, the probabilities decompose as
    \begin{equation}
        \label{eq:decomp1}
        p(a,b,\lambda|x,y) = p(b|\lambda,x,a,y)p(\lambda|a,x)p(a|x).
    \end{equation}
    Using Bayes' rule, we have
    \begin{equation}
        p(\lambda|a,x) = \frac{p(a|\lambda,x)p(\lambda|x)}{p(a|x)}.
    \end{equation}
    Substituting this back into \cref{eq:decomp1} gives
    \begin{equation}
        \label{eq:Bayes1}
        p(a,b,\lambda|x,y) = p(b|\lambda,x,a,y)p(a|\lambda,x)p(\lambda|x).
    \end{equation}
    Summing over $a$ and $b$ then gives $p(\lambda|x,y) = p(\lambda|x)$.  By \textbf{Time Symmetry}, we also have the decomposition
    \begin{equation}
        \label{eq:decomp2}
        p(a,b,\lambda|x,y) = p(a|\lambda,x,y,b)p(\lambda|b,y)p(b|y),
    \end{equation}
    and applying the same argument to this gives $p(\lambda|x,y) = p(\lambda|y)$.  We therefore have $p(\lambda|x,y) = p(\lambda|x) = p(\lambda|y)$, but this means that it cannot depend on either $x$ or $y$, so $p(\lambda|x,y) = p(\lambda)$.

    Substituting this into \cref{eq:Bayes1} and summing over $b$ gives
    \begin{equation}
        p(a,\lambda|x,y) = p(a|\lambda,x)p(\lambda).
    \end{equation}
    Dividing both sides by $p(\lambda|x,y)$, which is equal to $p(\lambda)$, gives
    \begin{equation}
        p(a|\lambda,x,y) = p(a|\lambda,x).
    \end{equation}
    as required.  Applying the same argument to the time-reversed decomposition will then give
    \begin{equation}
        p(b|\lambda,x,y) = p(b|\lambda,y)
    \end{equation}
\end{proof}

\begin{proof}[Proof of \cref{thm:main}]
    In the context of a spacelike Bell experiment, \cref{eq:measind} is known as \emph{measurement independence}, and equations \eqref{eq:param1} and \eqref{eq:param2} are known as \emph{parameter independence}.  These conditions, together with
    \begin{align}
        \label{eq:outind}
        p(b|a,\lambda,x,y) = p(b|\lambda,x,y),
    \end{align}
    which is known as \emph{outcome independence}, are known to imply Bell's local causality condition \cite{Jarrett1984, Shimony1984, Shimony1993}.  However, in the timelike context, \cref{eq:outind} is an instance of \textbf{$\bm{\lambda}$-mediation}.  In fact, we have the stronger condition that $p(b|a,\lambda,x,y) = p(b|\lambda,y)$, so this combined with \cref{lem:nolambda} implies \cref{eq:lc}.

    For completeness, we repeat the proof that these conditions imply local causality here.

    An ontic extension that satisfies \textbf{No Retrocausality} has decomposition
    \begin{equation}
        p(a,b,\lambda|x,y) = p(b|a,\lambda,x,y)p(\lambda|a,x,y)p(a|x).
    \end{equation}
    By the Bayes' rule argument used in the proof of \cref{lem:nolambda}, we can rewrite this as
    \begin{equation}
        \label{eq:noretrobayesdecomp}
        p(a,b,\lambda|x,y) = p(b|a,\lambda,x,y)p(a|\lambda,x,y)p(\lambda|x).
    \end{equation}
    By measurement independence, we have $p(\lambda|x) = p(\lambda)$.  By parameter independence we have $p(a|\lambda,x,y) = p(a|\lambda,x)$.  By outcome independence we have $p(b|a,\lambda,x,y) = p(b|\lambda,x,y)$ and then by parameter independence we have $p(b|\lambda,x,y) = p(b|\lambda,y)$.  Substituting all this into \cref{eq:noretrobayesdecomp} gives
    \begin{equation}
        p(a,b,\lambda|x,y) = p(b|\lambda,y)p(a|\lambda,x)p(\lambda),
    \end{equation}
    as required.
\end{proof}

\subsection{Example of a quantum violation}

\label{QE}

\Cref{eq:lc} is incompatible with some experiments in the no-signalling sector of quantum theory, for the same sort of reason as it is incompatible with Bell inequality violating experiments in the spacelike case. In particular, in the case where all the inputs and outputs are two-valued binary variables, it is well known that \eqref{eq:lc} implies the CHSH inequality \cite{Clauser1969, Brunner2013}.
\begin{equation}
    \label{eq:CHSH}
  \frac14 \sum_{x,y} p(a \oplus b = xy|x,y) \leq \frac34,
\end{equation}
where $\oplus$ is addition modulo $2$.

In the timelike context, the CHSH inequality can be violated using a single qubit, by taking $\Hilb[A] = \Hilb[B] = \mathbb{C}^2$, defining the projectors
\begin{equation}
    \left [ \theta \right ] = (\cos \frac{\theta}{2} \ket{0} + \sin \frac{\theta}{2} \ket{1})(\cos \frac{\theta}{2} \bra{0} + \sin \frac{\theta}{2} \bra{1}),
\end{equation}
and using
\begin{align}
    \rho_{0A|0} & = \frac{1}{2} [0], & \rho_{1A|0} & = \frac{1}{2} [\pi], \\ \rho_{0A|1} & = \frac{1}{2} \left [ \frac{\pi}{2} \right ]& \rho_{1A|1} & = \frac{1}{2} \left [ - \frac{\pi}{2} \right ],
\end{align}
\begin{align}
    E_{0|0B} & = \left [ \frac{\pi}{4} \right ] & E_{1|0B} & = \left [ -\frac{3\pi}{4} \right ] , \\
    E_{0|1B} & = \left [ -\frac{\pi}{4} \right ] & E_{1|1B} & = \left [ \frac{3\pi}{4} \right ],
\end{align}
and taking $\chn_{B|A}$ to be the identity superoperator.

The preparation is no-signalling because the ensemble average state for both inputs is the maximally mixed state $\rho_{A|0} = \rho_{A|1} = I_A/2$.  This experiment gives a value for the LHS of \cref{eq:CHSH} of $(\sqrt{2} + 2)/4 \approx 0.854$, in violation of the inequality.

\subsection{General quantum violation}

\label{QG}

In general, we would like to determine exactly which experiments in the no-signalling sector of quantum theory can have an ontic extension satisfying \textbf{No Retrocausality}, \textbf{$\bm{\lambda}$-mediation}, and \textbf{Time Symmetry}, and which cannot.  Specifically, given an initial ensemble average state $\rho_A$, and a CPT map $\mathcal{E}_{B|A}$, when can all experiments consisting of preparing an ensemble decomposition of $\rho_A$, applying the transformation $\mathcal{E}_{B|A}$, and measuring a POVM on $\Hilb[B]$, have an ontic extension satisfying these assumptions?  It turns out that this question is equivalent to asking when a bipartite state $\rho_{AB}$ on the tensor product $\Hilb[A] \otimes \Hilb[B]$ admits a local hidden variable theory, which is a question that has been studied extensively in the literature \cite{Augusiak2014}.

To see this, we again make use of the formalism described in \cite{Leifer2013a}.  Given a channel $\mathcal{E}_{B|A}$, we can define a bipartite operator $\rho_{B|A}$ on $\Hilb[A] \otimes \Hilb[B]$, known as a \emph{conditional state}, via the Choi-Jamio{\l}kowski isomorphism as follows
\begin{equation}
    \rho_{B|A} = \mathcal{I}_A \otimes \mathcal{E}_{B|A'} \left ( \Ket{\Phi^+}\Bra{\Phi^+}_{AA'} \right ),
\end{equation}
where $A'$ is a system with the same Hilbert space as $A$,  $\Ket{\Phi^+}_{AA'} = \sum_j \Ket{jj}_{AA'}$, and $\mathcal{I}_A$ is the identity superoperator on $\Hilb[A]$.  This is an isomorphism because the action of the CPT map can be recovered via
\begin{equation}
    \label{eq:Choirev}
    \chn_{B|A}(M_A) = \Tr[A]{\rho_{B|A}^{T_A} M_A \otimes I_B},
\end{equation}
for any operator $M_A$ on $\Hilb[A]$, and where $^{T_A}$ is partial transpose in the $\ket{j}$ basis.

One can then define a bipartite state on $\Hilb[A] \otimes \Hilb[B]$ via
\begin{equation}
    \label{eq:bipartite}
    \rho_{AB} = \left ( \rho_A^{\frac{1}{2}} \otimes I_B \right ) \rho_{B|A} \left ( \rho_A^{\frac{1}{2}} \otimes I_B \right ).
\end{equation}
This state has $\rho_A = \Tr[B]{\rho_{AB}}$ as its reduced state, and the action of $\mathcal{E}_{B|A}$ on $\text{supp}(\rho_A)$ can be recovered by inverting \cref{eq:bipartite} and then applying \cref{eq:Choirev}.  We can thus consider this an isomorphism between pairs $(\rho_A,\mathcal{E}_{B|A})$ and bipartite states $\rho_{AB}$.  Although the action of  $\mathcal{E}_{B|A}$ outside $\text{supp}(\rho_A)$ is not recovered by the inverse transformation, it is irrelevant for determining the probabilities for experiments with ensemble average state $\rho_A$, so we will get a true isomorphism of the experimental probabilities.

Specifically, for any set of ensemble decompositions of $\rho_A$ into states $\rho_{aA|x}$ and any set of POVMs $E_{b|yB}$ on $\Hilb[B]$, it is straightforward to check that
\begin{multline}
    \Tr{E_{b|yB}\mathcal{E}_{B|A}(\rho_{aA|x})} \\ = \Tr{E_{a|xA} \otimes E_{b|yB} \rho_{AB}},
\end{multline}
where $E_{a|xA}$ is defined in terms of $\rho_A$ and $\rho_{aA|x}$ via \cref{eq:retroPOVM}.  We can conversely take any POVM on $E_{a|Ax}$ and invert \cref{eq:retroPOVM} to obtain ensemble decompositions.  The right hand side of this equation represents the predictions for local measurements made on the bipartite state $\rho_{AB}$.  Applied to this experiment, \cref{eq:lc} is just Bell's local causality in its original context of spacelike EPR-type experiments.  Thus, a state-CP map pair $(\rho_A,\mathcal{E}_{B|A})$ admits an ontic extension satisfying \textbf{No Retrocausality}, \textbf{$\bm{\lambda}$-mediation}, and  \textbf{Time Symmetry} exactly when the predictions for local measurements on $\rho_{AB}$ admit a local hidden variable theory.

\section{Price's argument}

\label{Price}

As given in \cite{Price2012}, Price's argument is based on three assumptions, which he calls \emph{Realism}, \emph{Time-symmetry}, and \emph{Discreteness}.  There are also unstated background assumptions, which are equivalent to our \textbf{Realism} and \textbf{Free Choice} assumptions.  Since these are common to both arguments, we will leave them implicit in what follows.

Price summarises his argument as
\begin{quote}
\emph{Realism} + \emph{Time-symmetry} + \emph{Discreteness} $\implies$ Retrocausality.
\end{quote}
To give this a similar logical form to our result, this could be rewritten as
\begin{quote}
\textbf{No Retrocausality} + \emph{Realism} + \emph{Time-symmetry} + \emph{Discreteness} $\implies$ Contradiction (with quantum theory)
\label{huwarg}
\end{quote}

As noted in the introduction, Price's \emph{Realism} assumption is an assumption of the reality of the quantum state.  In our terminology, an ontic extension of a quantum experiment is called \emph{$\psi$-ontic} if, in the case where the states $\rho_{aA|x}$ are pure, i.e.\ $\rho_{aA|x}= p(a|x)\ket{\psi_{a|x}}\bra{\psi_{a|x}}_A$, the distributions $p(\lambda|a,x)$ and $p(\lambda|a'x')$ have disjoint supports whenever $\proj{\psi_{a|x}}$ and $\proj{\psi_{a'|x'}}$ are distinct \footnote{Measure-theoretic technicalities of this definition are discussed in \cite{Leifer2014}}.  An extension that is not $\psi$-ontic is called \emph{$\psi$-epistemic}.  We call the assumption that an extension should be $\psi$-ontic, the \textbf{$\bm{\psi}$-ontology} assumption.

As discussed in the introduction, Price's \emph{Discreteness} assumption is needed to argue that a photon exiting a beamsplitter and then being detected on one of the output ports is the time reverse of a photon being inserted into a beamsplitter along a definite input port.  This guarantees that the operational time reversal used in Price's argument can be identified as a true time reverse.  His \emph{Time-symmetry} assumption is then that this time reverse should be an ontological time reverse as well.  In fact, Price's experiment is its own operational time reverse, i.e.\ $P'=P$, $M'=M$ and $T'=T$, so he only needs to apply the assumption in this case.  Note that, if we time reverse the density operators and POVMs from the experiment we described in \S\ref{QE} using \cref{eq:retroPOVM} and \cref{eq:retrostates}, we do not obtain the same experiment.  However, this experiment has more than one operational time reverse because we can apply a unitary $U$ to the states and its inverse $U^{\dagger}$ to the POVM elements without changing the operational predictions.  By doing this, we can rotate the states and measurements back to the original ones, and we see that the original experiment is indeed its own operational time reverse.  Thus, we could restrict the \textbf{Time Symmetry} assumption to experiments that are their own time reverse in our argument as well, although we see no good reason for doing so.  In any case, our \textbf{Time Symmetry} assumption plays an equivalent role to the conjunction of \emph{Discreteness} and \emph{Time symmetry} in Price's argument.

Replacing Price's \emph{Realism}, \emph{Discreteness}, and \emph{Time symmetry} assumptions with our \textbf{$\bm{\psi}$-ontology} and \textbf{Time Symmetry} assumptions, the logic of Price's argument becomes
\begin{quote}
\textbf{No Retrocausality} + \textbf{$\bm{\psi}$-ontology} + \textbf{Time Symmetry} $\implies$ Contradiction (with quantum theory).
\label{ourhuwarg}
\end{quote}
Note that Price's argument does not require $\bm{\lambda}$\textbf{-mediation}, but we consider this a more reasonable assumption to make than $\bm{\psi}$\textbf{-ontology}, since $\psi$-epistemic retrocausal theories are of interest.  In fairness to Price, he does point out that retrocausal $\psi$-epistemic models are possible, and gives an example.  However, as explained shortly, without $\bm{\psi}$\textbf{-ontology}, the experiment considered by Price also has a non-retrocausal $\psi$-epistemic model, so his argument is not sufficient to show that such models should be retrocausal.

We can reconstruct the logic of Price's argument in our framework, with an even simpler experiment than he uses.  Consider an experiment that has no measurement, i.e.\ formally the input and output variables of the measurement can take only one possible value, so we can drop $y$ and $b$ from our probability distributions, and  represent the ``measurement'' by the trivial single-outcome POVM $I_B$.  The preparation has a binary input and output.  All that really matters for the argument is that the $\rho_{aA|x}$'s are four distinct pure states, but for definiteness we will use a qubit and the following four quantum states
\begin{align}
    \rho_{0A|0} & = \frac{1}{2} \left [ 0 \right ] & \rho_{1A|0} & = \frac{1}{2}\left [ \pi \right ], \label{eq:Price11} \\
    \rho_{0A|1} & = \frac{1}{2} \left [ \frac{\pi}{2} \right ] & \rho_{1A|1} & = \frac{1}{2} \left [ -\frac{\pi}{2} \right ]. \label{eq:Price12}
\end{align}

By the $\bm{\psi}$\textbf{-ontology} assumption, the four probability distributions $p(\lambda|a,x)$ each have disjoint support.  This means that the two distributions $p(\lambda|x)$ also have disjoint support because
\begin{equation}
    p(\lambda|x) = \sum_a p(\lambda|a,x)p(a|x),
\end{equation}
so, for a fixed $x$ we have a convex combination of two distributions and hence $\text{supp}(p(\lambda|x)) = \cup_{a} \text{supp}(p(\lambda|a,x))$, which are disjoint for different values of $x$.  In particular, this means that $\lambda$ and $x$ are not independent, i.e.\ $p(\lambda|x) \neq p(\lambda)$.

Now consider the operational time reverse of this experiment.  By \textbf{No Retrocausality} its ontic extension must satisfy $p(\lambda|x) = p(\lambda)$ because $x$ is now the measurement input and $\lambda$ is in its causal past.  However, by \textbf{Time Symmetry}, these probabilities must be the same as in the original experiment, so we have a contradiction.

The intuition behind this argument is that if the quantum state is real in the original experiment, in the sense of disjoint supports for the $p(\lambda|a,x)$'s, then the $p(\lambda|a,x)$'s must still have disjoint support in the ontological time reverse.  This implies that the retrodictive states, evolving backwards from the measurement to the preparation, must also be real, which contradicts \textbf{No Retrocausality}.

If, following Price, we only want to apply the \textbf{Time Symmetry} assumption to experiments that are their own operational time reverse, then we can add a measurement to the end of the experiment, with POVM elements
\begin{align}
    E_{0|0B} & = \left [ 0 \right ] & E_{1|0B} & = \left [ \pi \right ] , \label{eq:Price21} \\
    E_{0|1B} & = \left [ \frac{\pi}{2} \right ] & E_{1|1B} & = \left [ -\frac{\pi}{2} \right ]. \label{eq:Price22}
\end{align}
Using \cref{eq:retroPOVM} and \cref{eq:retrostates}, it is easily checked that this is its own operational time reverse.  This is a temporal version of the EPR experiment and is the example that Price actually uses.  However, the reasoning for this experiment would be exactly the same as above.  One reason for preferring this version is that, without the \textbf{No Retrocausality} assumption, \textbf{Time Symmetry} and \textbf{$\bm{\psi}$-ontology} now imply that both the usual forwards evolving quantum state and the backwards evolving retrodictive state must be real in the \emph{same} experiment.  However, the contradiction does not depend on having this extra feature.

\begin{table}
    \begin{tabular}{l|l}
         Assumptions & Experiment \\
         \hline
         \textbf{No Retrocausality}  & Prepare one of four \\
         $\bm{\psi}$\textbf{-ontology} & distinct pure states \\
         \textbf{Time Symmetry} & \\
         \hline
         \textbf{No Retrocausality} & \\
         $\bm{\psi}$\textbf{-ontology} & Timelike Bohm-EPR \\ 
         \textbf{Self Time Symmetry} & \\
         \hline
         \textbf{No Retrocausality} & \\
         $\bm{\lambda}$\textbf{-mediation} & Timelike Bell-CHSH \\
         \textbf{(Self) Time Symmetry} & 
    \end{tabular}
    \caption{\label{tbl:Summary}Summary of the assumptions and experiments involved in the two versions of Price's argument and in our argument.  \textbf{Self Time Symmetry} refers to the \textbf{Time Symmetry} assumption restricted to experiments that are their own operational time reverse.}
\end{table}

To conclude this section, we note that, without the \textbf{$\psi$-ontology} assumption, the experiment just described has a very simple ontological model, i.e.\ an ontic extension satisfying \textbf{No Retrocausality} and \textbf{$\bm{\lambda}$-mediation}.  This is just the Spekkens' toy theory \cite{Spekkens2007}, regarded as an ontological model for qubits prepared and measured in the eigenstates of Pauli operators.

Specifically, suppose that the ontic state space $\Lambda$ consists of two bits, $\lambda_0$ and $\lambda_1$.  We set the epistemic states $p(\lambda_0,\lambda_1|a,x) = \frac{1}{2} \delta_{\lambda_x,a}$, where $\delta$ is the Kronecker $\delta$-function, and the response functions $p(b|\lambda_0,\lambda_1,y) = \delta_{\lambda_y,b}$.  In addition, $p(a|x) = \frac{1}{2}$ for this experiment.  This gives the joint probabilities
\begin{multline}
    p(a,b,\lambda_0,\lambda_1|x,y) \\ = p(b|\lambda_0,\lambda_1,y)p(\lambda_0,\lambda_1|a,x)p(a|x) \\ = \frac{1}{4}\delta_{\lambda_y,b} \delta_{\lambda_x,a},
\end{multline}
which reproduces the operational predictions upon marginalizing over $\lambda_0$ and $\lambda_1$.  In addition, this distribution is invariant under the exchange of $(a,x)$ with $(b,y)$, so the operational time reverse, which is the same experiment in this case, is modelled by the same distribution, so \textbf{Time
Symmetry} is satisfied.

This model is $\psi$-epistemic as, for example, the distributions $p(\lambda_0,\lambda_1|a=0,x=0)$ and $p(\lambda_0,\lambda_1|a=0,x=1)$ both assign probability $1/2$ to the ontic state $\lambda_0 = 0, \lambda_1 = 0$.  It does, however, satisfy \textbf{$\bm{\lambda}$-mediation} because it is an ontological model.  This shows that \textbf{$\bm{\psi}$-ontology} is essential for Price's argument, but not well justified for the example he uses.

\section{Time Symmetry in de Broglie-Bohm and Everett}

\label{EdBB}

In the introduction, we mentioned that de Broglie-Bohm (dBB) theory and the Everett/many-worlds interpretation both satisfy the conventional notion of time symmetry of dynamical laws, but they are not counter-examples to our argument because \textbf{Time Symmetry} is a different notion.

The case of dBB is the easiest to deal with because it satisfies all of our assumptions other than \textbf{Time Symmetry}.  Consider an experiment performed on a spin-$\frac{1}{2}$ particle, where the preparation and measurement are of the spin.  The preparation prepares pure states $\ket{\psi_{x|a}}$ and the measurement is in orthonormal bases $\ket{\phi_{b|y}}$ (one basis for each value of $y$).  Suppose that the position state $\ket{\Psi(\vec{r})}$ remains uncorrelated with the spin between the preparation and measurement so that $\ket{\Psi(\vec{r})}$ is the same for all choices of $x$ and $a$.  In dBB, the ontic state of this system consists of two pieces: the quantum state $\ket{\Psi(\vec{r})}\otimes\ket{\psi_{a|x}}$ and the position $\vec{R}$ of the particle, i.e. $\lambda = (\ket{\Psi(\vec{r})}\otimes\ket{\psi_{a|x}},\vec{R})$.  The quantum state evolves according to the Schr{\"o}dinger equation, and the position evolves according to the guidance equation, both of which are symmetric under the replacement of $t$ with $-t$ \footnote{We also have to replace $\ket{\Psi}$ with $\ket{\Psi^*}$, but our argument works for states that only have real amplitudes.}.  Since the position state is the same for all the experiments we are considering, and hence the position $\vec{R}$ is always distributed in the same way, the violation of \textbf{Time Symmetry} is entirely due to the spin state.

Since this theory is $\psi$-ontic, it is susceptible to the arguments given for the example in the previous section.  Namely, given that $p(\lambda|x)$ depends on $x$ (i.e.~the spin part of the quantum state is either $\ket{0}$ or $\ket{\pi}$ if $x=0$, and either $\Ket{\frac{\pi}{2}}$ or $\Ket{-\frac{\pi}{2}}$ if $x=1$), $\lambda$, which includes the spin state, is necessarily different for each choice of $x$.  \textbf{Time symmetry} then implies that $\lambda$ should depend on the measurement choice as well, but it does not as the quantum state only depends on the choice of preparation, and the distribution of positions is the same for all experiments.

As shown by Price's argument, a $\psi$-ontic theory needs to include a backwards evolving retrodictive state, in addition to the usual predictive state, in its ontology in order to satisfy \textbf{Time Symmetry}.  For example, the transactional interpretation \cite{Cramer1986, Kastner2012} and the two state vector formalism \cite{Aharonov1964, Aharonov2008}, interpreted realistically, both posit such states and satisfy \textbf{Time Symmetry}.

The reason why this does not conflict with the dynamical time symmetry of dBB is that operational time symmetry and dynamical time symmetry disagree on what is the time reverse of an experiment.  To see this, it is useful to look at the dynamical time reverse of a prepare and measure experiment.  In order to exhibit dynamical time reversal, it is necessary to treat the preparation and measurement devices quantum mechanically, as dynamical time symmetry only holds when we do not apply the measurement postulates of quantum theory.  In any case, this is how measurement is supposed to be handled in dBB, as the measurement postulates are understood as effective rules that apply when there is sufficient decoherence in the position basis.

To do this, we introduce four systems in addition to the spin-$\frac{1}{2}$ particle $S$.  The preparation device has an input, described by a physical device such as a switch, which we label $X$.  $X$ can be in one of several orthogonal states $\ket{x}_X$ corresponding to the different choices of preparation input.  The output is described by a device such as a pointer, which we label system $A$, with orthogonal states $\ket{a}_A$ corresponding to the output.  Importantly, both $X$ and $A$ are macroscopic systems with the different $\ket{x}_X$ states having almost no overlap in position and similarly for the $\ket{a}_A$ states.  The measurement device is also described by two macroscopic systems, $Y$ and $B$, describing its inputs and outputs, with macroscopically distinct orthogonal states $\ket{y}_Y$ and $\ket{b}_B$.  Prior to the experiment, the systems are in the state $\ket{x}_X \otimes \ket{0}_A \otimes \ket{y}_Y \otimes \ket{0}_B \otimes \ket{0}_S$, the important points being that the preparation and measurement inputs are set according to the choice of the experimenter, and the spin-$\frac{1}{2}$ particle and output systems are in fixed states.

The preparation can then be described by a unitary operation $U_{XAS}$, which only acts on $X$, $A$ and $S$ as
\begin{multline}
    U_{XAS} \ket{x}_X \otimes \ket{0}_A \otimes \ket{0}_S \\ = \ket{x}_X \otimes \sum_a \frac{1}{\sqrt{p(a|x)}} \ket{a}_A\otimes\ket{\psi_{a|x}},
\end{multline}
Similarly, the measurement can be described by a unitary operation $V_{YBS}$, which only acts on $Y$, $B$ and $S$ as
\begin{multline}
    V_{YBS} \ket{y}_Y \otimes \ket{0}_B \otimes \ket{\psi}_S \\ = \ket{y}_Y \otimes \sum_b \ket{b}_B \otimes \left [ \phi_{b|y} \right ] \ket{\psi}.
\end{multline}

$U_{XAS}$ maps states in which $A$ and $S$ are uncorrelated to an entangled superposition of different $S$ states correlated with the value of $A$, and similarly $V_{YBS}$ maps states in which $B$ and $S$ are uncorrelated to an entangled superposition of different $S$ states correlated with the value of $B$.  Their dynamical time reverses $U^{\dagger}_{XAS}$ and $V^{\dagger}_{YBS}$ do the opposite, mapping these correlated states to unentangled ones.  Thus, although preparations and measurements are operational time reverses of one another, they are not dynamical time reverses of one another, but rather both operations have the same sort of form.  The dynamical time reverse of a measurement would be a kind of preparation in which we are allowed to initially place the system in a superposition of different preparation outputs, rather than one in which $a$ takes a definite value.

The case of Everett/many-worlds is a little more subtle, as, unlike dBB, it does not satisfy the "single world" part of our (Single World) \textbf{Realism} assumption, so it does not fit into the framework we have used to prove our result.  Therefore, to make the argument that many-worlds does not satisfy \textbf{Time Symmetry} in rigorous detail, one would first have to reformulate \textbf{Time Symmetry} in a framework that allows experiments to have multiple co-existent outcomes at the ontological level.  Nonetheless, even without doing this, the intuitive idea of \textbf{Time Symmetry} is still clear: experiments that are operational time reverses of one another ought to be time reverses at the ontological level.  Because many-worlds posits the unitary evolution of the usual forwards-evolving quantum state as its entire ontology, it is $\psi$-ontic and does not satisfy \textbf{Time Symmetry} due to the lack of an ontic backwards evolving state.  However, this does not contradict dynamical time symmetry, which works the same way as in dBB. 

For most no-go theorems, such as Bell's theorem and results about contextuality, the move to a many-worlds framework completely evades the conclusions of the theorem.  It is therefore intriguing that this move does not help to evade our result.  

Readers who are happy with dynamical time symmetry may be inclined to reject our \textbf{Time Symmetry} assumption at this point.  However, the conceptual idea of time symmetry---that you cannot tell the difference between a video played forwards and played in reverse---is more fundamental than any particular mathematical formalisation of it.  In a theory in which the operational predictions are probabilistic, and in which there is no universal agreement about ontology, there will be several different formalisations of time symmetry.  Although dynamical time symmetry is the traditional notion, it does not have a unique claim to capture the basic concept.  Even if one does view dynamical time symmetry as more fundamental than operational time symmetry, the latter is still a symmetry that exists in quantum theory, and it lacks an explanation in theories that reject our \textbf{Time Symmetry} assumption.

\section{Relation to Spekkens Contextuality}

\label{Spekkens}

Spekkens' variant of noncontextuality states that experimental procedures that are operationally equivalent, in the sense of predicting the same probabilities, should be represented the same way in an ontological model \cite{Spekkens2005}.  In particular, two preparation procedures that always yield the same probabilities for every measurement should correspond to the same epistemic state, or probability distribution, over the ontic states.  This is known as preparation noncontextuality.

In the present context, when we consider the ensemble average over outputs of a no-signalling preparation, the operational probabilities obey $p(b|x,y) = p(b|x',y) = p(b|y)$ for all $x,x'$, which is just the no-signalling condition.  Therefore in a preparation noncontextual ontological model, we must have $p(\lambda|x) = p(\lambda|x') = p(\lambda)$.  In our main theorem, we derived this as a consequence of \textbf{Time Symmetry} instead.

In quantum theory, no-signalling preparations correspond to different ensemble decompositions of the same density operator $\rho_A$, so preparation noncontextuality says that, in an ontological model, $\rho_A$ should be represented by the same probability distribution over $\lambda$, regardless of which ensemble decomposition was used to prepare it.

Spekkens' showed that no ontological model can satisfy this instance of preparation noncontextuality for a particular setup involving decompositions of the maximally mixed state of a qubit into six pure states \cite{Spekkens2005}. It has also been shown that this instance preparation noncontextuality implies temporal Bell inequalities, such as the CHSH inequality we used here \cite{Spekkens2009}.  These results can therefore be used as alternative proofs of the impossibility of an ontological model satisfying \textbf{Time Symmetry}.

These preparation contextuality results have been demonstrated experimentally \cite{Spekkens2009, Mazurek2016}, so these experiments can also be viewed as demonstrations of our result.  However, we would like an experimental confirmation to be independent of the details of quantum theory, which raises some important issues.

Firstly, our time symmetry assumption is only supposed to be applied to experiments in the no-signalling sector of an operational theory.  Without assuming a background theory, like quantum theory, there is really no way of verifying that a preparation is no-signalling.  This is because, although every measurement we have so far managed to perform might reveal no signal from the preparation, there could always be some novel measurement, which we have not figured out how to do yet, that would yield a signal.  Tests of preparation contextuality face a similar issue, which is that two preparation procedures cannot be shown to be operationally equivalent by experiment alone.  If we are willing to make additional assumptions, such as the assumption that the measurements we can do are tomographically complete, then it is possible to devise robust tests of preparation contextuality that do not suffer from this problem \cite{Pusey2015}, and these tests could, in the same way, be used as robust tests of our main result.

However, we also think that the \textbf{Time Symmetry} assumption should only be applied to theories that have an operationally time symmetric no-signalling sector, in order to avoid the kind of accidental time symmetries discussed in \S\ref{TSA}.  This is truly impossible to verify experimentally, as it would involve testing every possible no-signalling experiment.  Therefore, whilst experimentally robust tests of some forms of preparation contextuality can also be viewed as tests of our result, we think that the main significance of our result is in the context of a specific theory, which may or may not be quantum, but is known to be operationally time symmetric on theoretical grounds.

\section{Discussion and Conclusions}

\label{Conc}

We have shown that there is no ontic extension of quantum theory that satisfies \textbf{No Retrocausality}, \textbf{$\bm{\lambda}$-mediation}, and our \textbf{Time Symmetry} assumption, i.e.\ the requirement that an operational time symmetry should imply an ontological one.  Requiring an ontic extension can be broken down into two assumptions: \textbf{Realism} and \textbf{Free Choice}.  Our main result shows that at least one of these five assumptions must be given up.

\textbf{Realism} and \textbf{Free Choice} are part and parcel of what is normally meant by a realist theory, and are shared by many no-go theorems, such as Bell's theorem.  Their status here is no different from in these other contexts.  Proponents of Copenhagenish interpretations deny that ontic states exist at all, so they would give up on \textbf{Realism}.  Since our intention is to investigate realist interpretations, we need not discuss this option further here.  However, one can potentially also give up \textbf{Realism} by denying that it is actually required for a realist interpretation.  For example, since our realism is single-world, the Everett/many-worlds interpretation does not satisfy it, but, as discussed in the introduction, Everett/many-worlds is not time symmetric in the sense we are interested in, so adopting it does not solve the problem.  Still, it is possible that a different interpretation that involves multiple worlds might exist that does have an appropriate time symmetry.  Another option is to adopt a non-standard logic or probability theory.  Whilst this idea is appealing, the interpretation of such theories is usually operational, and a compelling realist underpinning is currently lacking.

In our view, the idea that ontic states are responsible for correlations between preparation and measurement, which is the idea behind $\bm{\lambda}$\textbf{-mediation}, is also a core feature of a realist theory.  It encodes the idea that ontic states are supposed to explain what we see in experiments.  If we are contemplating giving up \textbf{No Retrocausality}, then it is not entirely clear how to formulate this assumption, as conditional independence of past and future variables given $\lambda$ is not a reasonable condition in a theory where $\lambda$ can be causally affected by things both in its past and in its future.  Nonetheless, some semblance of the idea that preparations and measurements are correlated because of $\lambda$ needs to be retained in a retrocausal theory, and it is a challenge for proponents of retrocausal views to formalize this in a meaningful way.

This leaves us with \textbf{Time Symmetry} and \textbf{No Retrocausality} as the most reasonable assumptions to give up.  Both of these assumptions can be viewed as special cases of the idea that an ontic extension should not be fine tuned, which makes it difficult to make a clear choice between them.

Loosely speaking, a no-fine tuning assumption means that, when the operational predictions have a specified property, e.g.\ a symmetry, we postulate that the underlying realist theory should also have the same property.  For example, the operational predictions of quantum theory do not allow us to send signals into the past, so we posit that there are no causal influences from future to past in the underlying theory.  This is the \textbf{No Retrocausality} assumption.  Similarly, the operational predictions of quantum theory obey a certain kind of time symmetry, so we posit this symmetry for the underlying theory as well.  This is our \textbf{Time Symmetry} assumption.

No fine tuning assumptions are extremely common in no-go theorems for quantum theory.  For example, Spekkens' noncontextuality assumption, i.e.\ procedures that are operationally identical should be identical in the underlying theory, is of this type, as is parameter independence in Bell's theorem, i.e.\ we cannot send signals superluminally, so our choice of measurement should not affect anything at the other wing of the experiment superluminally either \footnote{See \cite{Wood2015} for a detailed discussion of this in the case of Bell's theorem. The assumptions in Bell's theorem can also be motivated directly from physical reasoning, e.g. invoking special relativity, without any appeal to the operational predictions of quantum theory. One might likewise be able to provide physical motivations for both \textbf{Time Symmetry} and \textbf{No Retrocausality} that do not appeal to their operational counterparts. We set aside such motivations in our discussion and focus on the fine tuning narrative because we believe that narrative enables the clearest comparison of the two assumptions.}.

No fine tuning assumptions seem reasonable because, if a given property holds in the underlying theory, then we can accept that property as a fundamental physical principle.  For example, that there are no causal influences from future to past might be accepted as such a principle.  Then, the explanation for why the property holds at the operational level is just by appeal to that principle.  For example, if there are no causal influences from future to past at all, then obviously there must be no signalling from future to past.

However, when a property holds operationally, but not ontologically, then it cannot be because of a fundamental physical principle.  Instead, there is a tension that cries out for explanation.  For example, if there actually are influences from future to past then why is it that we cannot use those influences to send a signal?

This can only happen if the probabilities describing our ignorance of the true ontic state are fine tuned.  For example, if there are retrocausal influences in the theory, but no signalling into the past, then the correlations between the operational variables and the ontic states must be set in such a way that the possibility of signalling will be exactly washed out upon marginalizing over the ontic state.

Note that we expect our operational experiments to only reveal coarse-grained information about the ontic state, e.g.\ consider the classical example in which the ontic state is the microstate, but the operational degrees of freedom only probe the macrostate.  However, we do not expect our experiments to be so coarse grained that they completely miss key features of the underlying ontology.  For example, in statistical mechanics we can in fact detect many properties of the ontology by observing the small statistical fluctuations they cause, as in Brownian motion.  Therefore, if there are retrocausal influences, one would expect to be able to use them in some way to signal, even if this is not as straightforward as by directly observing the ontic state.  If not, there is an explanatory gap that needs filling.

Since both \textbf{No Retrocausality} and \textbf{Time Symmetry} are no-fine tuning assumptions, the case for each of them is equally compelling.  However, there are at least three possible ways of dealing with a fine tuning, which are appropriate in different circumstances, and can help to shed light on which one of them must go.

\begin{enumerate}
    \item We can accept them as brute facts.
    \item We can look for an alternative framework that is not fine-tuned.
    \item We can explain them as emergent.
\end{enumerate}

The first option is to just accept that the fine tuning is a prediction of our physical theory, and leave it at that.    If the ontological description is playing any explanatory role at all then it is going to differ from the operational description in at least some respects, some of which will be mere accidents.  Therefore, this might be an appropriate response if we are looking at some rather specific and convoluted property of the operational predictions that does not seem closely connected to fundamental physical principles.  However, both time symmetry and the lack of signalling into the past seem like general features of our physical world, which deserve a better explanation than this.

The second option is to say that the fine tuning does not actually exist in nature, but is rather an indication that one of our other assumptions is wrong.  Once we discard the, perhaps implicit, faulty assumption, we will be able to find a theory with no fine tunings.  In the present context, it is not completely out of the question to discard \textbf{Realism} so that we can retain some version of both \textbf{No Retrocausality} and \textbf{Time Symmetry}, but, as we have discussed, it is challenging to conceive of a realist framework that does not assume \textbf{Realism}.

This brings us to the third option, which is emergent fine tunings.  As an example of this, consider a universe that is in thermal equilibrium everywhere at a fixed temperature.  This universe contains many fine tunings, e.g.\ it is not possible to send a signal into the future in this universe, let alone the past.  The underlying statistical mechanical description says that there are possible non-equilibrium states in the theory that do not have this property, so the probabilities seem fine tuned in order to prevent signalling into the future.  However, in this case, we have a dynamical explanation of the fine tuning.  If the universe does start in a nonequilibrium state then the dynamics will, with high probability, eventually evolve the universe towards equilibrium.  This is, of course, a perfectly legitimate explanation of why we might see an apparent fine tuning.

If we accept this kind of explanation, then we also have to accept that the fine tuning is not based on a fundamental physical principle, and is likely to have been violated somewhere in the universe at some point, e.g.\ in the early universe shortly after the big bang, before the analogue of the equilibration process had time to take hold.  It is difficult to see how this could happen for \textbf{Time Symmetry}, which seems like a fundamental physical symmetry, but much easier for \textbf{No Retrocausality}.

Recall that the very definition of the subjective arrow of time is that we can remember the past, which seems fixed, but the future seems open and manipulable.  If signalling into both the past and the future were commonplace then there would be no definite subjective arrow of time, and arguably no possibility for subjective experience.  There are good reasons for believing that this arrow is necessarily aligned with the thermodynamic arrow of time, and we already believe that the thermodynamic arrow is emergent, to be explained by the initial conditions of the universe, rather than a fundamental principle of physics.  It is therefore plausible that the possibility of signalling into the past in a universe with retrocausality would be washed out by the same processes from which the thermodynamic arrow emerges.

In other words, given that we already need an asymmetry in the boundary conditions of the universe in order to account for the thermodynamic arrow of time, it seems plausible that this same asymmetry could be used to explain why there is signalling into the future, but no signalling into the past, in a universe with retrocausality.

We conclude that the most plausible response to our result, other than giving up \textbf{Realism}, is to posit that there might be retrocausality in nature.  At the very least, this is a concrete and little explored possibility that holds the promise of evading almost all no-go theorems in the foundations of quantum theory, so it should be investigated further.

\begin{acknowledgments}
We are grateful to Michael Dascal, Lev Vaidman, and the participants of ``Free Will and Retrocausality in a Quantum World'' (July 2014, Cambridge, UK) for helpful questions and discussions.  Research at Perimeter Institute is supported by the Government of Canada through the Department of Innovation, Science and Economic Development Canada and by the Province of Ontario through the Ministry of Research, Innovation and Science. ML acknowledges the support of the Foundational Questions Institute, FQXi ``Physics of What Happens'' grant number FQXi-RFP-1512 for the project ``Quantum Theory in the Block Universe''.
\end{acknowledgments}

\appendix

\section{Dropping $\lambda$-mediation}

\label{app}

If we drop \textbf{$\bm{\lambda}$-mediation}, then \cref{lem:nolambda} still applies, so the probabilities in an ontic extension of an experiment must satisfy
\begin{align}
    p(\lambda|x,y) & = p(\lambda), \label{aeq:measind} \\
    p(b|\lambda,x,y) & = p(b|\lambda, y), \label{aeq:param1} \\
    p(a|\lambda,x,y) & = p(a|\lambda,x). \label{aeq:param2}
\end{align}

In the context of a spacelike Bell experiment, \cref{aeq:measind} is called \emph{measurement independence}, which is an implication of \textbf{No Retrocausality} applied to that context.  \Cref{aeq:param1} and \cref{aeq:param2} are known as \emph{parameter independence} in the that context.  Thus, everything we can prove from measurement independence and parameter independence for a spacelike Bell experiment, we can also prove from \textbf{No Retrocausality} and \textbf{Time Symmetry} for a timelike experiment.

In particular, this means that we can apply the Colbeck-Renner theorem \cite{Colbeck2011, Colbeck2012a} (see \cite{Leifer2014} for a treatment in the ontological models framework).  From the assumptions of measurement independence and parameter independence, they prove that there is a particular Bell experiment for which the probabilities must obey
\begin{align}
    \label{eq:CR}
    p(a|x,\lambda) & = p(a|x), & p(b|y,\lambda) & = p(b|y).
\end{align}
This means that, although we cannot prove that an ontic extension is impossible for this experiment, knowing the ontic state does not enable you to make finer-grained predictions about the marginal probabilities for either side of the experiment than just knowing its input would.  However, they may still allow finer-grained predictions about how the two wings are correlated.  Thus, although we may still posit ontic states, they are not adding as much as we might have hoped to the theory. 

The experiment used by Colbeck and Renner can be converted into a timelike experiment using the construction given in \S\ref{QG}.  From this, we get the same conclusion in the timelike case as in the spacelike case: knowing $\lambda$ does not help to predict the preparation output if you know its input, and similarly for the measurement.  Thus, an ontic extension does not really add as much as we might have hoped to the predictive power of the theory.

The timelike experiment used to prove this result uses a qubit with the identity transformation between preparation and measurement.  It has inputs $x$ and $y$ that run from $0$ to $N-1$, and binary outputs.  The preparation is described by the states
\begin{equation}
    \rho_{aA|x} = \frac{1}{2} \left [ \left ( \frac{x}{N} + a \right ) \pi \right ],
\end{equation}
and the measurement by the operators
\begin{equation}
    E_{b|yB} = \left [ \left ( \frac{2y+1}{2N} + b \right ) \pi \right ].
\end{equation}
The Colbeck-Renner proof works in the limit that $N \rightarrow \infty$.  In this limit, the experiment involves preparing every pure state in the $X$-$Z$ plane of the Bloch sphere as we vary over $x$ and $a$, and we could easily modify the proof to work on any plane of the Bloch sphere.  Similarly, we could embed the proof in any two dimensional subspace of a higher dimensional Hilbert space.  Thus, $p(a|x, \lambda) = p(a|x)$ and $p(b|y,\lambda) = p(b|y)$ for an experiment that can prepare any pure state in quantum theory.

\bibliography{sym}
\end{document}